\PassOptionsToPackage{dvipsnames}{xcolor}
\documentclass[acmsmall,screen]{acmart}

\acmJournal{PACMPL}
\AtBeginDocument{%
  }

\setcopyright{acmlicensed}
\copyrightyear{2018}
\acmYear{2018}
\acmDOI{XXXXXXX.XXXXXXX}

\usepackage{booktabs}
\usepackage{placeins}
\usepackage{float}
\usepackage{listings}
\usepackage{xcolor}
\usepackage{microtype}
\usepackage{enumitem}
\usepackage{tikz}
\usetikzlibrary{shapes.geometric,arrows.meta,positioning,calc,fit}
\usepackage{algpseudocode}
\usepackage{xspace}
\newcommand{\eg}{\textit{e.g.,}}
\lstdefinelanguage{Rust}{
  keywords={fn,pub,unsafe,extern,let,mut,use,struct,impl,for,if,else,
             return,match,Some,None,true,false,const,static,type,mod,
             crate,self,super,as,in,where,trait,enum,Box,Vec,Option,
             Result,bool,u8,u16,u32,u64,i8,i16,i32,i64,usize,isize,
             str,String,CStr,CString},
  sensitive=true,
  morecomment=[l]{//},
  morecomment=[s]{/*}{*/},
  morestring=[b]",
}
\lstdefinelanguage{C}{
  keywords={int,char,void,struct,typedef,static,const,if,else,return,
             NULL,size_t,unsigned,long},
  sensitive=true,
  morecomment=[l]{//},
  morecomment=[s]{/*}{*/},
  morestring=[b]",
}
\usepackage[utf8]{inputenc} % allow utf-8 input
\usepackage{amsfonts}       % blackboard math symbols
\usepackage{multirow}       % for multirow cells in tables
\usepackage{nicefrac}       % compact symbols for 1/2, etc.

\usepackage{makecell}
\usepackage{enumitem}
\usepackage{amsmath} 
\newtheorem{definition}{Definition}
\newtheorem{invariant}{Invariant}
\newtheorem{claim}{Claim}
\usepackage{bm}
\usepackage[ruled,vlined,linesnumbered,noresetcount]{algorithm2e}
\usepackage{subcaption}
\begin{document}

%%
%% The "title" command has an optional parameter,
%% allowing the author to define a "short title" to be used in page headers.
\title{ENCRUST: Encapsulated Substitution and Agentic Refinement on a Live Scaffold for Safe C-to-Rust Translation}

%%
%% The "author" command and its associated commands are used to define
%% the authors and their affiliations.
%% Of note is the shared affiliation of the first two authors, and the
%% "authornote" and "authornotemark" commands
%% used to denote shared contribution to the research.
\author{HoHyun Sim}
\email{tlaghgus0425@korea.ac.kr}
\authornotemark[1]
\affiliation{%
  \institution{Korea University}
  \country{South Korea}
}

\author{Hyeonjoong Cho}
\authornotemark[2]
\authornote{Corresponding author: Hyeonjoong Cho}
\email{raycho@korea.ac.kr}
\affiliation{%
  \institution{Korea University}
  \country{South Korea}
}

\author{Ali Shokri}
\authornotemark[3]
\email{ashokri@Central.UH.EDU}
\affiliation{%
  \institution{University of Houston}
  \country{United States}
}

\author{Zhoulai Fu}
\authornotemark[4]
\email{zhoulai.fu@sunykorea.ac.kr}
\affiliation{%
  \institution{State University of New York Korea}
  \country{South Korea}
}
\author{Binoy Ravindran}
\authornotemark[5]
\email{binoy@vt.edu}
\affiliation{%
  \institution{Virginia Tech}
  \country{United States}
}

%%
%% By default, the full list of authors will be used in the page
%% headers. Often, this list is too long, and will overlap
%% other information printed in the page headers. This command allows
%% the author to define a more concise list
%% of authors' names for this purpose.

%%
%% The abstract is a short summary of the work to be presented in the
%% article.
\begin{abstract}

We present Encapsulated Substitution and Agentic Refinement on a Live Scaffold for Safe C-to-Rust Translation, a two-phase pipeline for translating real-world C projects to safe Rust. Existing automated approaches either produce wholly unsafe output that offers no memory-safety guarantees, or translate functions in isolation and verify them independently, failing to detect cross-unit type mismatches or handle unsafe constructs that inherently require whole-program reasoning. Furthermore, function-level LLM pipelines require coordinated caller updates whenever a type signature changes, while project-scale systems fail to produce compilable output under real-world dependency complexity. Encrust addresses these limitations by decoupling boundary adaptation from function logic via an Application Binary Interface (ABI)-preserving wrapper pattern and validating every intermediate state against the fully integrated codebase.
Phase 1 (Encapsulated Substitution) translates each function using an ABI-preserving wrapper pattern that splits it into two components: a caller-transparent shim retaining the original raw-pointer signature, and a safe inner function targeted by the LLM with a clean, scope-limited prompt. This design enables independent per-function type-signature changes with automatic rollback on failure, without requiring coordinated caller updates. A subsequent deterministic, type-directed wrapper elimination pass then removes the wrappers once translation succeeds.
Phase 2 (Agentic Refinement) resolves unsafe constructs that exceed per-function scope, including static mut globals, skipped wrapper pairs, and failed translations, through an LLM agent operating on the whole codebase under a baseline-aware verification gate.
We evaluate Encrust on 7 GNU Coreutils programs and 8 libraries from the Laertes benchmark (197,706 LoC, 2,366 functions), demonstrating substantial unsafe-construct reduction across all 15 programs while maintaining full test-vector correctness across all benchmarks.
\end{abstract}

%%
%% The code below is generated by the tool at http://dl.acm.org/ccs.cfm.
%% Please copy and paste the code instead of the example below.
%%
\begin{CCSXML}
<ccs2012>
   <concept>
       <concept_id>10010147.10010178</concept_id>
       <concept_desc>Computing methodologies~Artificial intelligence</concept_desc>
       <concept_significance>500</concept_significance>
       </concept>
   <concept>
       <concept_id>10011007.10011006.10011008</concept_id>
       <concept_desc>Software and its engineering~General programming languages</concept_desc>
       <concept_significance>500</concept_significance>
       </concept>
 </ccs2012>
\end{CCSXML}

\ccsdesc[500]{Computing methodologies~Artificial intelligence}
\ccsdesc[500]{Software and its engineering~General programming languages}
%%
%% Keywords. The author(s) should pick words that accurately describe
%% the work being presented. Separate the keywords with commas.
\keywords{Large Language Model, C-to-Rust, Agent}

\received{20 February 2007}
\received[revised]{12 March 2009}
\received[accepted]{5 June 2009}

%%
%% This command processes the author and affiliation and title
%% information and builds the first part of the formatted document.
%%
\maketitle

\section{Introduction}

C remains one of the most widely deployed systems languages, forming the foundation of operating systems, embedded firmware, cryptographic
libraries, and critical infrastructure. Yet C provides no bounds checking, no ownership model, and no lifetime tracking; these missing guarantees are the root cause of a
substantial fraction of real-world security vulnerabilities: buffer
overflows, use-after-free errors, and data races continue to dominate
CVE databases despite decades of tooling investment~\cite{szekeres2013sok, cling, intel}.
Rust offers a compelling alternative, providing memory and
thread safety through a borrow-checked ownership system with no garbage
collection overhead~\cite{matsakis2014rust,rustbelt}.
Migrating existing C codebases to Rust would bring these guarantees to
code that has already been hardened and battle-tested over decades,
without discarding accumulated functionality and performance.
This is a compelling prospect for safety-critical systems where rewrites from
scratch are infeasible.

Automated translation from C to Rust is, however, technically difficult.
The closest off-the-shelf solution, \textsc{C2Rust}~\cite{c2rust}, applies
syntactic rewriting rules to produce a Rust program that is behaviorally
equivalent to the C source, but the output is \emph{wholly unsafe}: every
C pointer becomes a raw Rust pointer, and no ownership model is
synthesized.
The resulting code compiles and runs correctly, but it offers no
memory-safety guarantees beyond what C already provides.
Subsequent rule-based tools such as \textsc{Laertes}~\cite{laertes},
\textsc{Crown}~\cite{crown}, and \textsc{CRustS}~\cite{crusts} reduce
unsafe usage incrementally, but they are confined to the syntactic and
type-level patterns they were explicitly designed for; semantic
transformations such as replacing pointer arithmetic with iterators,
converting null-terminated \texttt{char*} to \texttt{\&str}, or introducing
\texttt{Result}-based error handling remain out of reach.

Large language models (LLMs) offer a path beyond pattern-based rewriting:
they can understand program intent and generate idiomatic Rust that
captures the semantics of C code in a human-readable, safe form.
Recent LLM-based translation systems~\cite{vert,sactor,rustmap,c2saferrust,evoc2rust,rustine}
have demonstrated impressive results on individual functions or small
benchmarks; they typically translate each function in isolation and
verify it independently, without assembling or testing the full codebase.
Scaling to real projects, however, exposes two fundamental challenges
that these approaches do not fully resolve.
Function-level LLM pipelines such as \textsc{C2SaferRust} suffer from
dependency propagation: any change to a function's type signature must
be coordinated across all callers, a cascading overhead that grows with
project scale. Project-scale systems such as \textsc{EvoC2Rust} attempt
to sidestep this constraint but produce output that does not compile on
any of the seven Coreutils programs we evaluate (functional correctness
0\% across all seven), because cross-unit dependencies remain unverified
until the full codebase is assembled. \textsc{Encrust} addresses both
failure modes.

\textbf{Challenge 1: decomposing translation complexity at ABI boundaries.}
Real C projects at scale involve deeply interleaved function dependencies:
functions share types, calling conventions, and global state across many
files.
A natural approach is to supply each function to the LLM together with its
full boundary context—call sites, global variable accesses, and
imports—so that the model can reason about how the translated function
will fit into its surroundings.
This context, however, substantially increases the difficulty of each
individual translation task: the LLM must simultaneously produce a
type-correct function body, a signature compatible with every call site,
and correct handling of any shared global state, all in a single
generation step.
The compounded complexity raises the per-function failure rate, depressing
the overall function compilance pass rate even when each sub-problem is
individually tractable.
What is needed is a mechanism that \emph{separates} the boundary
adaptation concern from the function logic concern, allowing the LLM to
focus on one at a time.

\textbf{Challenge 2: the limits of per-function scope.}
Per-function translation is insufficient in two respects:
\emph{verification scope} and \emph{transformation scope}.

\paragraph{Verification scope.}
Even when every individual translation compiles and passes its local
tests, the integrated crate may still fail: inter-unit dependencies
invisible at the per-function level surface as type mismatches,
unresolved symbols, or semantic divergences only when the full codebase
is assembled.
Correctness must therefore be verified on the whole integrated crate,
not on isolated translation units.

\paragraph{Transformation scope.}
Several categories of unsafe constructs are inherently intractable
within a single function's scope.
\texttt{Static mut} globals are accessed across many files simultaneously;
eliminating them safely requires reasoning about every read, write, and
address-of site across the entire program.
Similarly, unsafe constructs whose signatures are shared across many call
sites cannot be resolved without coordinated multi-file edits that a single
per-function generation step cannot produce.

Both respects call for a translation agent that operates on the
whole codebase, guided by an end-to-end verification gate.

\paragraph{\textsc{Encrust}.}
We present \textsc{Encrust}, a two-phase pipeline that addresses both
challenges.

\medskip
\noindent\textbf{Phase~1: Encapsulated Substitution} (\S\ref{sec:phase1}) resolves Challenge~1.
It translates each function using an \emph{ABI-preserving wrapper pattern}:
each function is split into a thin wrapper that preserves the original name
and C-compatible signature, and a safe inner function that implements the
translation.
The wrapper body consists exclusively of fixed \texttt{let}-binding
templates that perform raw-to-safe type conversions at each parameter
position, isolating all boundary-adaptation work in one dedicated shim.
Consequently, the LLM prompt for the safe inner function is stripped of
call-site obligations: it presents only the function's own logic, the safe
signatures of already-translated callees, and available safe struct
definitions, rather than the raw-pointer signatures at every call site or the
global variable layout.
This \emph{focused prompt} directly achieves the separation demanded by
Challenge~1: the LLM reasons about boundary conversion and function
logic in two separate, simpler steps rather than in one entangled
generation.
Untranslated callers continue to invoke the wrapper by its original name
with raw pointer types; a failed translation is silently rolled back and
the original unsafe body retained, so the crate remains compilable and
test-passing at every step (\emph{Live Scaffold Invariant}).
After all functions are processed, a deterministic
\emph{type-directed wrapper elimination} pass rewrites call sites to
invoke the safe inner functions directly, producing a wrapper-free crate.

\medskip
\noindent\textbf{Phase~2: Agentic Refinement} (\S\ref{sec:phase2}) resolves Challenge~2.
Residual unsafe patterns such as \texttt{static mut} globals involve
complex cross-file dependencies that no per-function pass can resolve in
isolation; addressing them requires whole-codebase visibility and
coordinated multi-file edits.
Phase~2 meets this requirement with an LLM agent equipped with 17
code-navigation and code-modification tools that operates on the entire
codebase, guided by a baseline-aware verification gate: no transformation
is committed unless the fully integrated crate compiles and passes the
complete test suite, catching the cross-unit failures that per-function
verification misses.

\medskip
We make the following contributions:

\begin{itemize}[noitemsep]
  \item \textbf{Encapsulated Substitution}
    (\S\ref{sec:func-trans}, \S\ref{sec:tdwe}):
    a translation scheme that splits each function into a thin
    \emph{caller-transparent wrapper} retaining the original name and
    raw-pointer signature, and a \emph{safe inner function} with
    idiomatic Rust types that the LLM targets with a focused,
    boundary-free prompt, enabling independent per-function translation
    with silent rollback and no coordinated caller updates, followed by
    a compile-and-test-gated \emph{type-directed wrapper elimination} (TDWE) pass.
  \item \textbf{An agentic refinement loop} (\S\ref{sec:phase2}): a
    multi-tool LLM agent with a task taxonomy, fixed processing order,
    and baseline-aware verification that resolves unsafe patterns
    requiring whole-codebase reasoning, plus the narrow set of
    unsafe-cast wrappers that TDWE pre-emptively deferred.
  \item \textbf{Evaluation on real-world benchmarks} (\S\ref{sec:experiment}):
    an assessment on 7 GNU Coreutils programs and 8 Laertes libraries
    (197,706 LoC, 2,366 functions total) showing that \textsc{Encrust}
    substantially reduces unsafe usage while preserving full end-to-end
    correctness across all 15 programs.
    Unlike prior work, which verifies only isolated translation units,
    \textsc{Encrust} validates the fully integrated crate; for the Laertes
    libraries, which ship without test cases, we construct harnesses and
    generate 1,000 inputs per library via AFL++ coverage-guided fuzzing
    to enable this end-to-end check for the first time.
\end{itemize}

%% ============================================================
%% Related Work: C-to-Rust Translation
%% ============================================================

\section{Related Work}
\label{sec:related}

%% ----------------------------------------------------------
\subsection{Rule-Based C-to-Rust Transpilation}
\label{sec:related:rule}
%% ----------------------------------------------------------

Early efforts to automate C-to-Rust migration rely on rule-based or
static-analysis-driven transpilation, translating C syntax and semantics
mechanically without large language models.

\textsc{C2Rust}~\cite{c2rust} is the most widely used starting point for automated C-to-Rust translation. It applies a fixed set of syntactic rewriting rules to produce Rust that is behaviorally equivalent to the input C, but the output is extensively unsafe: every C pointer becomes a raw pointer, and no ownership model is synthesized, meaning the generated code offers no memory-safety guarantees beyond what C already provides.

\textsc{Laertes}~\cite{laertes} builds on C2Rust output and uses the Rust compiler as a blackbox oracle, iteratively applying transformation rules and fixing type errors until the program compiles, thereby rewriting a subset of raw pointers to safe Rust references. It is restricted, however, to pointers that are unsafe solely due to missing ownership and lifetime information, excluding those involved in pointer arithmetic, unsafe casts, or other confounding factors. Semantic restructuring such as replacing pointer arithmetic with iterators, null-terminated char* with \&str, or C error codes with Result lies outside its scope.

\textsc{Crown}~\cite{crown} builds on C2Rust and applies ownership constraint analysis, using a SAT solver to infer whether each raw pointer is owning or non-owning, and retypes it accordingly to \texttt{Box} or a safe reference.

\textsc{CRustS}~\cite{crusts} also post-processes \textsc{C2Rust} output, applying 220 TXL source-to-source transformation rules to reduce unsafe keyword usage in function signatures and narrow the scope of unsafe blocks. Of these rules, 198 are strictly semantics-preserving and 22 are semantics-approximating, trading minor behavioral fidelity for greater safety. As with any rule-based approach, coverage is inherently limited to patterns anticipated at design time.

All of the above tools share a common limitation: they operate on syntactic or type-level patterns that can be identified without understanding program intent. Semantic transformations such as replacing pointer arithmetic with iterators, converting null-terminated \texttt{char*} to \texttt{\&str}, or introducing \texttt{Result}-based error handling require a level of comprehension that rule-based systems cannot provide. Our approach addresses this gap by using an LLM for semantic translation while retaining rule-based mechanisms for the systematic, deterministic aspects of type rewriting.

%% ----------------------------------------------------------
\subsection{LLM-Based C-to-Rust Translation}
\label{sec:related:llm}
%% ----------------------------------------------------------

LLM-based approaches split naturally into \emph{function-level} systems
that translate and verify each function in isolation, and
\emph{project-scale} systems that target whole repositories.
\textsc{Encrust} belongs to the latter group; we survey both.

\subsubsection{Function-Level Translation}
Early LLM-based work focuses on individual functions or small benchmarks.

\textsc{VERT}~\cite{vert} combines LLM-based and rule-based translation: it compiles the C program to WebAssembly and lifts it to Rust via rWasm to produce a semantically correct oracle, while in parallel using an LLM to generate a readable Rust candidate. The candidate is verified against the oracle using property-based testing and bounded model checking; if verification fails, a new candidate is regenerated.

\textsc{FLOURINE}~\cite{flourine} replaces the WebAssembly oracle with
differential fuzzing, checking input/output equivalence between the C
original and the translated Rust without requiring pre-written tests.

\textsc{Syzygy}~\cite{syzygy} combines LLM-driven code and test
translation guided by dynamic analysis, using execution information
to ensure that the generated Rust code and its tests are mutually
consistent and functionally equivalent to the original C.

\textsc{SmartC2Rust}~\cite{smartc2rust} segments the C program to fit within the LLM context window, then applies an iterative repair loop in which compilation errors, semantic discrepancies, and unsafe statements are each fed back to the LLM as separate repair signals until all errors are resolved.

\textsc{SACTOR}~\cite{sactor} is a LLM-driven C-to-Rust translation tool that employs a two-step pipeline: an initial unidiomatic translation that preserves C semantics, followed by an idiomatic refinement that eliminates unsafe blocks to conform to Rust standards. Static analysis guides both stages by providing the LLM with hints on pointer semantics and dependency resolution, and correctness is verified via FFI-based end-to-end testing.

\textsc{SafeTrans}~\cite{safetrans} is a framework that iteratively repairs compilation and runtime errors in two phases: a basic repair phase using compiler feedback, followed by a few-shot guided repair phase that provides error-type-specific context and code examples to the LLM. The repair strategies are informed by an analysis of which Rust features LLMs most frequently mistranslate.

\textsc{TymCrat}~\cite{tymcrat} focuses on type migration, i.e., replacing C types with appropriate idiomatic Rust types in function signatures. To handle the challenging many-to-many correspondence between C and Rust types, they introduce three techniques: generating candidates signatures, providing translated callee signatures as context to the LLM, and iteratively fixing type errors using compiler feedback.

\subsubsection{Project-Scale Translation}
Several systems target whole-project migration, the regime most
relevant to real-world adoption.

\textsc{RustMap}~\cite{rustmap} decomposes a C project into translation units ordered by usage dependencies, combining static analysis of syntactic structure with dynamic call graph profiling to determine translation order. Each unit is translated by an LLM with iterative repair driven by compilation and test feedback, and the translated units are composed into a runnable Rust program. A limitation is that correctness cannot be guaranteed for functions not exercised by the provided test cases.

\textsc{LLMigrate}~\cite{llmigrate} addresses LLM ``laziness'' (the tendency to omit large code sections when processing long contexts) by splitting modules into individual functions using Tree-sitter, translating each independently, and reassembling them in call graph topological order. It is validated on three Linux kernel modules: math, sort, and ramfs.

\textsc{EvoC2Rust}~\cite{evoc2rust} introduces a three-stage
skeleton-guided pipeline for project-level translation: it first
decomposes the C project into functional modules and uses a
feature-mapping-enhanced LLM to transform definitions and macros,
producing type-checked function stubs that form a compilable Rust
skeleton; it then incrementally fills in each stub with a full
function translation; finally, it applies a cascading repair chain
that combines rule-based transformations and LLM-driven refinement
to resolve remaining compilation errors.

\textsc{Rustine}~\cite{rustine} is a fully automated pipeline for
repository-level C-to-Rust translation that works from scratch rather
than building on \textsc{C2Rust} output. It applies program analysis
to refactor the C source prior to translation, resolving preprocessor
directives, minimizing pointer operations, and extracting caller-callee
and lifetime dependencies to guide translation order, and employs a
two-tier LLM strategy that escalates to a reasoning model only on
failure. Compared to six prior approaches, the resulting Rust code is
safer, more idiomatic, and more readable.

\textsc{C2SaferRust}~\cite{c2saferrust} combines \textsc{C2Rust} and
LLMs in a neuro-symbolic pipeline: it first uses \textsc{C2Rust} to
produce an initial unsafe Rust version, then decomposes it into
function-level slices ordered by static dependency analysis. Each
slice is augmented with dataflow and call graph context derived from
static analysis before being passed to an LLM for translation into
safer Rust, with compiler and test feedback driving iterative repair
until the slice compiles and passes end-to-end tests.

\textsc{IRENE}~\cite{irene} enhances LLM-based translation by
combining a rule-augmented retrieval module, which selects relevant
translation examples using syntactic rules from a static analyzer,
with a structured summarization module that generates a semantic
summary of the C code. Both outputs are composed into a single prompt,
and an error-driven module applies iterative compiler-feedback
refinement to correct translation errors.

\begin{figure}[htbp]
  \centering
  \includegraphics[width=\linewidth]{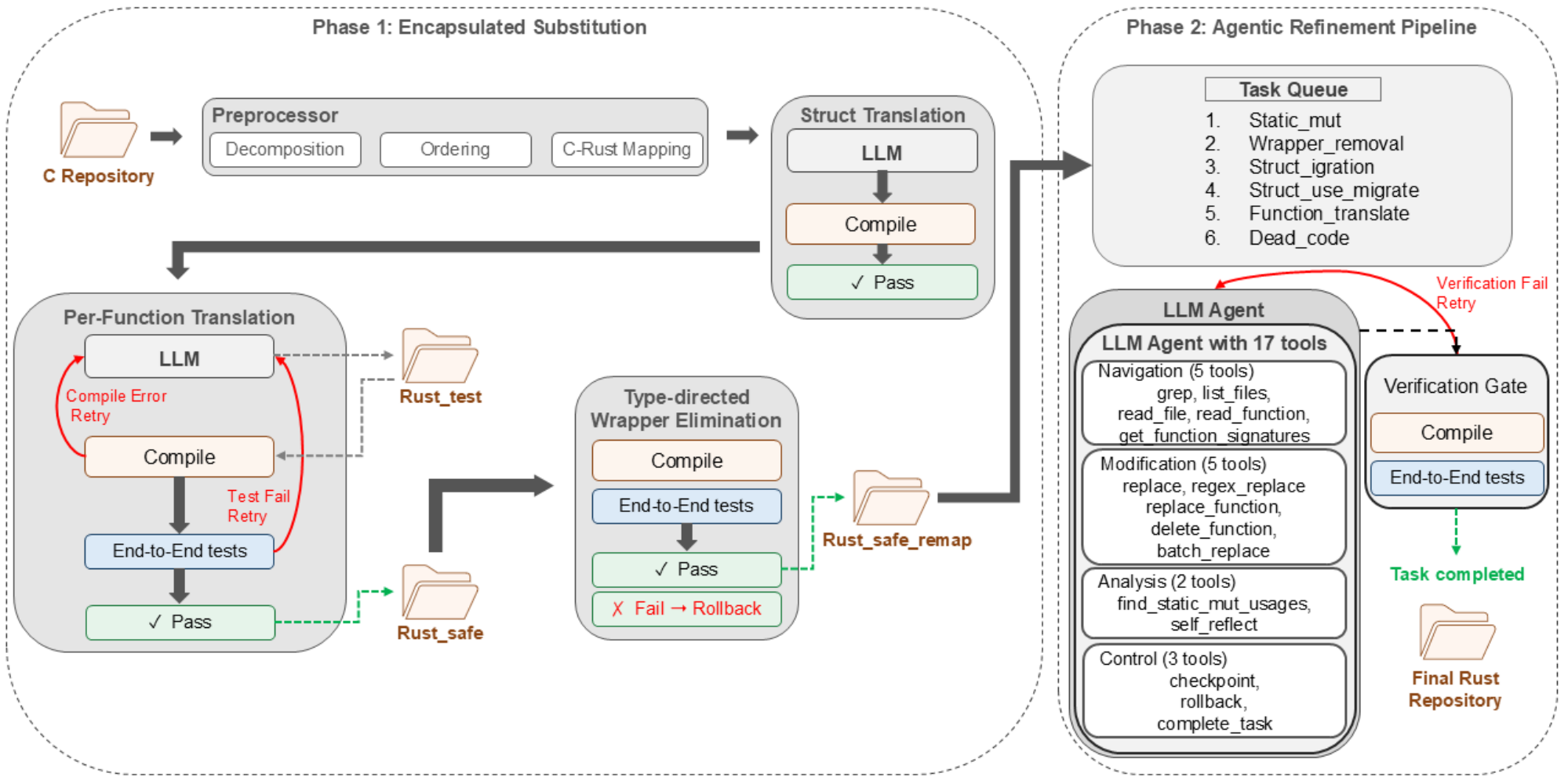}
  \caption{\textsc{Encrust} two-phase translation pipeline.
    \textbf{Phase~1} (left): for each function, the LLM generates a
    wrapper/safe-function pair verified through a compile-and-test loop;
    successfully translated pairs are collected into \texttt{rust\_safe/}.
    After all functions are translated, type-directed wrapper elimination
    rewrites call sites to invoke safe functions directly, yielding a
    wrapper-free crate \texttt{rust\_safe\_remap/}.
    \textbf{Phase~2} (right): an LLM agent equipped with 17 tools
    resolves the unsafe patterns deferred from Phase~1---\texttt{static
    mut} globals, unsafe-cast wrapper pairs, failed struct translations,
    and functions that exhausted the per-function retry budget---through
    an iterative loop governed by a compile-and-test verification gate.}
  \label{fig:pipeline-overview}
\end{figure}

\section{ENCRUST}
\subsection{Phase 1: Encapsulated Substitution}
\label{sec:phase1}

\textsc{Encrust} structures the translation as a two-phase pipeline
(Figure~\ref{fig:pipeline-overview}) whose design is governed throughout
by the \emph{coexistence constraint}: at every intermediate step, the
partially translated crate must compile and pass the full test suite,
because untranslated callers continue to depend on already-replaced
functions using their original raw-pointer signatures.

The C source is first transpiled to unsafe Rust by
\textsc{C2Rust}, yielding a behaviorally equivalent but
wholly unsafe base crate (\texttt{rust/}).
\textbf{Phase~1} (Figure~\ref{fig:pipeline-overview}, left) transforms \texttt{rust/} into a safe
equivalent through four ordered stages that operate on two parallel
working copies:
\texttt{rust\_test/} serves as the live compile-and-test scaffold, while
\texttt{rust\_safe/} accumulates verified safe function pairs.
Preprocessing (\S\ref{sec:preprocessing}) computes a leaf-first translation
order and builds LLM prompting context;
struct translation (\S\ref{sec:struct-trans}) generates safe counterpart
types for C-originated structs;
per-function translation (\S\ref{sec:func-trans}) replaces each function
with an ABI-preserving wrapper/safe-function pair verified through
compile-and-test;
and type-directed wrapper elimination (\S\ref{sec:tdwe}) rewrites all call
sites, producing a wrapper-free crate (\texttt{rust\_safe\_remap/}).

\textbf{Phase~2} (\S\ref{sec:phase2}) resolves the unsafe patterns deferred
from Phase~1: \texttt{static mut} globals, unsafe-cast wrapper pairs
pre-emptively skipped by TDWE, and failed translations, through an LLM
agent that navigates, edits, compiles, and tests the Phase~1 output in an
iterative loop.

Phase~1's design is governed throughout by the coexistence constraint.
Three choices follow directly: dual-struct preservation
(\S\ref{sec:struct-trans}), the wrapper pattern (\S\ref{sec:func-trans}),
and per-function failure isolation (\S\ref{sec:compile-test}). All are
expressed in the following invariant, which \textsc{Encrust} maintains
throughout Phase~1:

\begin{invariant}[Live Scaffold]
\label{inv:scaffold}
At every point during Phase~1, the \texttt{rust\_test/} crate compiles and passes
the full test-vector suite.
Concretely:
\begin{enumerate}[nosep,leftmargin=*]
  \item \textbf{Before any stage begins:} the \textsc{C2Rust}-generated crate
        compiles (ensured by \textsc{C2Rust} itself).
  \item \textbf{After each struct translation step:} the newly appended safe struct
        block and its conversion implementations compile; a failure is corrected by
        truncating the file to its pre-insertion length, restoring the prior state.
  \item \textbf{After each per-function translation step:} the wrapper/safe pair
        inserted into \texttt{rust\_test/} compiles and all test vectors pass;
        a persistent failure triggers a full-file snapshot rollback and the original
        unsafe body is retained, leaving the crate unchanged.
  \item \textbf{After type-directed wrapper elimination:} the remapped crate
        \texttt{rust\_safe\_remap/} is verified by rerunning the test-vector suite;
        any regression causes the wrapper to be restored, reverting to
        \texttt{rust\_safe/}.
\end{enumerate}
\end{invariant}

The four stages are described next: preprocessing, struct translation, per-function
translation, and type-directed wrapper elimination.
Concrete code examples using \texttt{gettext\_quote}
(\texttt{quotearg.c}) and the \texttt{quoting\_options} struct appear
from \S\ref{sec:struct-trans} onward, where the running example first
becomes relevant.

\subsubsection{Preprocessing}
\label{sec:preprocessing}

Before any LLM call is made, Phase~1 establishes the context every
subsequent stage depends on: a decomposition of the code into
translatable units, a leaf-first translation order, and a C-to-Rust
source mapping for prompt construction.

\paragraph{Decomposition.}
\textsc{Encrust} decomposes both source representations into function-level records.
From the C source, each function's name, source location, and body are extracted,
along with struct definitions used during struct translation (\S\ref{sec:struct-trans}).
From the C2Rust Rust output, each function record captures its source text, file span,
direct callees, referenced globals, \texttt{extern "C"} blocks, and \texttt{use} imports.

\paragraph{Translation ordering.}
\textsc{Encrust} orders functions leaf-first using the call graph derived from the Rust
decomposition: callees are translated before their callers so that each function's
safe callee signatures are available for prompt construction.
For genuine strongly connected components, an arbitrary order is chosen
(deterministically by function name); any resulting type mismatches surface
as compile errors in the per-function retry loop and are repaired there.

\paragraph{C-to-Rust source mapping.}
Since C2Rust preserves C function names verbatim, each Rust function record
is matched to its C source by name lookup.

\subsubsection{Struct Translation}
\label{sec:struct-trans}
% ============================================================

Before translating functions, \textsc{Encrust} generates safe counterparts for
\texttt{c2rust}-produced structs with raw pointer fields.
For each such struct $S$, replacing it in place would break all call sites the moment any field type
changed; \textsc{Encrust} instead \emph{preserves} $S$ unchanged and appends a
parallel safe struct $\hat{S}$ (Definition~\ref{def:ssa}) to the same source file as the \emph{dual-struct}
representation, so that safe code operates on $\hat{S}$ while the C ABI
boundary remains intact.

\paragraph{Struct Classification}
\label{sec:struct-classify}

Structs are classified into three mutually exclusive categories:

\begin{itemize}[leftmargin=*]
  \item \emph{System structs}: types whose definitions originate from
        OS or libc headers (\lstinline{FILE}, \lstinline{stat},
        \lstinline{passwd}, \lstinline{dirent}, etc.).
        These represent kernel-defined layouts that must remain
        \lstinline{#[repr(C)]}; generating a safe counterpart would serve
        no purpose since they cannot be reconstructed from safe Rust types.
        They are retained as-is.

  \item \emph{Pointer-free structs}: structs whose every field is a
        primitive, a fixed-size array, or another pointer-free struct.
        Rust's type system already provides full safety guarantees for such
        types; no transformation is needed.

  \item \emph{Translatable structs}: structs that originate from C
        source (not from anonymous \texttt{c2rust}-internal unions) and
        contain at least one raw pointer field.
        These are the sole target of safe struct abstraction generation.
\end{itemize}

\paragraph{Safe Struct Abstraction}
\label{sec:dual-struct}

For each translatable struct, the LLM prompt supplies two inputs:
(1)~the \texttt{c2rust}-generated unsafe Rust struct $S$ already present
in \texttt{rust\_test}, which defines the exact field names and raw types
the LLM must work with; and (2)~optionally, the original C struct source
as a \emph{semantic hint}, helping the LLM infer field intent
(\eg~whether a \lstinline{*mut c_uint} is a nullable scalar or a
length-bounded array).
The C source is context only; the transformation target is always $S$,
the unsafe Rust struct.

\begin{definition}[Safe Struct Abstraction]
\label{def:ssa}
Given a \texttt{c2rust}-generated unsafe Rust struct $S$ already present
in \texttt{rust\_test}, a safe struct abstraction for $S$ consists of
three components:
\begin{itemize}[nosep,leftmargin=*]
  \item A \emph{safe struct} $\hat{S}$: a \lstinline{pub struct} named
        in \texttt{CamelCase} whose fields replace the raw pointer types
        of $S$ with safe Rust counterparts: string pointers become
        \lstinline{Option<CString>}, length-paired pointers become
        \lstinline{Vec<T>}, nullable scalar pointers become
        \lstinline{Option<Box<T>>}, and primitives are unchanged.
  \item A \emph{materialisation function}
        $\mathit{from}_S : \&S \to \hat{S}$, realised as
        \lstinline{impl From<&S> for }$\hat{S}$, that \emph{copies}
        all pointer-reachable data of $S$ into fresh Rust-owned
        allocations (Invariant~\ref{inv:copy}).
  \item A \emph{projection function}
        $\mathit{to\_raw}_S : \&\hat{S} \to S$, realised as a
        \lstinline{to_raw(&self) -> S} method, that constructs a value
        of type $S$ whose pointer fields alias into $\hat{S}$'s
        allocations.
        The returned $S$ borrows from $\hat{S}$ and must not outlive it.
\end{itemize}
\end{definition}

\noindent
Safe code operates exclusively on $\hat{S}$; wrappers call $\mathit{from}_S$
at entry and $\mathit{to\_raw}_S$ at C-ABI boundaries (\S\ref{sec:wrapper}).

A representative safe struct abstraction for \lstinline{quoting_options}
is shown in the supplementary material.
When an array length cannot be reliably inferred from the C source
(no integer field is paired with the pointer in the struct), the LLM
is instructed to use a conservative fixed bound and annotate it with a
\texttt{// FIXME: inferred length} comment; if the compile-and-test loop
rejects the result, the struct translation is rolled back and the struct
is recorded in \texttt{failed\_structs} for Phase~2 struct migration
(\S\ref{sec:phase2-structs}).

\paragraph{Memory Ownership Discipline}
\label{sec:ownership}

The correctness of the materialisation function $\mathit{from}_S$ rests
on a non-obvious ownership constraint that arises specifically from the
incremental nature of Phase~1.  During translation, safe and unsafe functions coexist in the
same crate.  An untranslated function may hold a live reference to the
same raw struct $S$ that a translated function has just converted via
$\mathit{from}_S$.
If $\mathit{from}_S$ transferred ownership of a pointer field (e.g., via
\lstinline{CString::from_raw(raw.left_quote)}), the resulting
\lstinline{CString} would free the C-allocated memory when it is dropped
at the end of the safe function's scope.  Any subsequent access to
\lstinline{raw.left_quote} by the still-live untranslated function would
then constitute a use-after-free.
This hazard is especially acute for string literals: in C, string literals
reside in read-only memory segments; calling \lstinline{free} on them
causes an immediate trap.

We therefore require the following invariant, which is communicated
verbatim to the LLM in its prompt alongside counterexamples:

\begin{invariant}[Copy, Never Own]
\label{inv:copy}
Every $\mathit{from}_S$ implementation must \emph{copy} all
pointer-reachable data into fresh Rust-owned allocations.
Specifically:
\begin{enumerate}[nosep,leftmargin=*]
  \item \textbf{String fields:} use
        \lstinline{CStr::from_ptr(p).to_owned()}.
        \lstinline{CString::from_raw(p)} is prohibited: it claims
        ownership of C memory and frees it on drop.
  \item \textbf{Buffer fields:} use
        \lstinline{slice::from_raw_parts(p, n).to_vec()}.
        \lstinline{Vec::from_raw_parts(p, n, cap)} is prohibited.
  \item \textbf{Scalar pointer fields:} use \lstinline{Box::new(*p)} to
        copy the pointed-at value.
  \item \textbf{In \lstinline{to_raw()}:} return pointers via
        \lstinline{.as_ptr()} into the safe struct's allocations; never
        call \lstinline{Box::into_raw()} on a cloned value, as that leaks
        memory.  The caller must ensure the returned $S$ does not outlive
        $\hat{S}$.
\end{enumerate}
\end{invariant}

\noindent
\paragraph{Enforcement.}
The compile-and-test loop provides a dynamic safety net for violations of
Invariant~\ref{inv:copy}: a generated \texttt{from} implementation that
incorrectly transfers ownership will typically trigger a use-after-free or
double-free detectable at test time.
Because test-vector coverage is not exhaustive, enforcement is best-effort;
violations in untested code paths remain undetected.

\noindent With safe struct abstractions appended to \texttt{rust\_test/},
per-function translation can use $\hat{S}$ types in safe function bodies
while the original \texttt{\#[repr(C)]} structs remain intact at every
C-ABI boundary.

\subsubsection{Per-Function Translation}
\label{sec:func-trans}

Per-function translation processes each function in leaf-first order.
The central challenge is how to replace a function's implementation
without breaking the call sites of untranslated callers;
the \emph{wrapper pattern} resolves this.

\paragraph{The Wrapper Pattern}
\label{sec:wrapper}

A direct replacement of a function body would change its type signature,
breaking every call site that still passes raw C types.
To preserve callability across the translation boundary, \textsc{Encrust}
replaces the original function with two definitions:

\begin{definition}[Wrapper Pattern]
\label{def:wrapper}
Given an unsafe function $f$ with C-compatible signature
$f : T_1 \times \cdots \times T_n \to T_r$, the wrapper pattern
produces:
\begin{itemize}[nosep,leftmargin=*]
  \item $\texttt{f\_safe} : \hat{T}_1 \times \cdots \times \hat{T}_n \to \hat{T}_r$,
        the \emph{safe inner function}.
        It carries the suffix \texttt{\_safe} and accepts safe Rust types
        $\hat{T}_i$: C strings become \lstinline{&CStr}, length-paired
        pointers become slices (\lstinline{&[T]} / \lstinline{&mut [T]}),
        nullable pointers become \lstinline{Option<&T>}, and primitives are
        unchanged. It implements the function's logic without raw pointer
        operations wherever possible.
  \item $f : T_1 \times \cdots \times T_n \to T_r$,
        the \emph{wrapper function}.
        It preserves the original name, signature, and qualifiers
        (\texttt{unsafe}, \texttt{pub}, \texttt{extern "C"},
        \texttt{\#[no\_mangle]}, \texttt{\#[inline]}).
        Its body consists exclusively of type-conversion \texttt{let}
        bindings that shadow the original parameter names, followed by a
        single call to $\texttt{f\_safe}$.
\end{itemize}
\end{definition}

This decomposition has three essential properties:
\emph{ABI preservation} (the wrapper retains the original name, signature,
and calling convention, so all call sites compile unchanged);
\emph{safety isolation} (unsafe operations are confined to the wrapper's
simple let-bindings, while the safe function is maximally free of \texttt{unsafe});
and the \emph{name-shadowing property} (each parameter is re-bound under its
original identifier, enabling the TDWE pass (\S\ref{sec:tdwe}) to rewrite
call sites by textual substitution without alpha-renaming).

\FloatBarrier
\begin{lstlisting}[language=Rust,
  caption={Wrapper pattern for a function taking a C string and a
    pointer-to-integer.  Lines 13--14 show the name-shadowing property:
    the same identifier is reused for the converted value.},
  label={lst:wrapper-full}]
// (1) Safe inner function: safe Rust types
fn quotearg_safe(
    name:  &std::ffi::CStr,
    count: &mut libc::c_int,
) {
    let s = name.to_str().unwrap_or("");
    *count += s.chars()
               .filter(|c| *c == (@'"'@))
               .count() as libc::c_int;
}

// (2) Wrapper: identical signature to the c2rust original
#[no_mangle]
pub unsafe extern "C" fn quotearg(
    name:  *const libc::c_char,
    count: *mut libc::c_int,
) {
    // name-shadowing: same identifier, converted type
    let name  = std::ffi::CStr::from_ptr(name);
    let count = &mut *count;
    quotearg_safe(name, count)
}
\end{lstlisting}

\paragraph{Compile-Test Loop and Snapshot Rollback}
\label{sec:compile-test}

The LLM prompt for each function $f$ supplies the original C source,
the C2Rust Rust body, safe callee signatures, and safe struct definitions
$\hat{S}$; the LLM emits a pair conforming to Definition~\ref{def:wrapper}.
Once the candidate pair is emitted, \textsc{Encrust} inserts it into
\texttt{rust\_test} and attempts to compile the crate.
A compile failure triggers a retry: the compiler error message is
appended to the prompt and a new candidate is requested.
If compilation succeeds, \textsc{Encrust} runs the project's test-vector suite;
a test failure is likewise fed back as a repair signal.
To prevent a failed candidate from corrupting subsequent translations,
the target source file is snapshotted before each insertion and restored
on any persistent failure.
If a function cannot be translated within the retry budget (five attempts),
its original unsafe body is retained and translation proceeds
to the next function in order; the live scaffold invariant ensures the
crate continues to compile throughout.
We treat compile-and-test passage as the per-function correctness
criterion: a translation is accepted when the crate compiles without
errors and all test vectors pass, establishing behavioral compatibility
with the original C program on the provided inputs.

\subsubsection{Type-Directed Wrapper Elimination (TDWE)}
\label{sec:tdwe}

With every translated function represented as a wrapper/safe pair,
the wrappers are now redundant: call sites can invoke $f\_\mathit{safe}$
directly, giving every exported symbol a safe Rust signature and
eliminating the naming indirection and wrapper boilerplate.

\paragraph{Approach}

Wrapper elimination is tractable because the wrapper pattern
(\S\ref{sec:wrapper}) enforces a uniform structure: each wrapper body is
a flat sequence of let-bindings followed by a single tail call to
$f\_\mathit{safe}$, so the per-parameter conversions are already written
down and need not be re-derived.

The pass extracts these conversions from each wrapper's let-bindings
(using the last binding for re-bound parameters) and rewrites every
call site $f(a_1,\ldots,a_n)$ to
$f\_\mathit{safe}(\mathit{cvt}(a_1),\ldots,\mathit{cvt}(a_n))$.

When direct extraction is ambiguous (\eg~the wrapper body contains control
flow or the caller has already partially converted its arguments), the pass
applies a six-layer fallback strategy in order:
\begin{enumerate}[nosep,leftmargin=*]
  \item \textbf{Direct match}: if the argument type already equals the safe
        parameter type, it is passed through unchanged.
  \item \textbf{Cross-module struct}: detect \lstinline{*const S} arguments
        corresponding to cross-module struct types and synthesize a conversion
        using the struct's \texttt{From} implementation.
  \item \textbf{Null-pointer optimization}: \lstinline{null_mut()} arguments
        in already-translated callers become \lstinline{None} for
        \lstinline{Option}-typed parameters.
  \item \textbf{Safe-caller undo}: if the call site is inside an
        already-translated safe function, undo any prior safe-to-raw
        conversion applied to the argument.
  \item \textbf{Slice synthesis}: merge a \lstinline{(*mut T, size)}
        argument pair into \lstinline{\&mut [T]} via
        \lstinline{from\_raw\_parts\_mut}.
  \item \textbf{Type-level fallback}: synthesize a conversion from the raw
        and safe parameter types alone, without consulting the wrapper body.
\end{enumerate}
If none of the six layers produces a valid conversion for a given call site,
the call to $f$ is left unchanged and the wrapper is retained; the crate
remains compilable.

Listing~\ref{lst:tdwe} illustrates the pass on the running example.
\lstinline{gettext_quote} in \texttt{rust\_safe/} consists of a safe inner
function (\lstinline{gettext_quote_safe}) and a wrapper that accepts a raw
\lstinline{*const c_char} and converts it to \lstinline{&CStr} via
\lstinline{CStr::from_ptr}.
The remapping pass (1)~reads the single let-binding
\lstinline{let msgid = CStr::from_ptr(msgid)} from the wrapper body to
derive $\mathit{cvt}(\mathtt{msgid}) = \mathtt{CStr::from\_ptr(msgid)}$;
(2)~removes the wrapper; (3)~renames \lstinline{gettext_quote_safe} to
\lstinline{gettext_quote} and promotes it to \lstinline{pub}; and
(4)~rewrites every call site to apply the extracted conversion to each
raw-pointer argument.

\FloatBarrier
\begin{lstlisting}[language=Rust,
  caption={Type-directed wrapper elimination for \texttt{gettext\_quote}
    (GNU Coreutils \texttt{cat}).
    \emph{Before}: \texttt{rust\_safe/} contains a wrapper/safe pair
    (Definition~\ref{def:wrapper}); call sites pass raw \texttt{*const c\_char} pointers.
    \emph{After}: the wrapper is removed, \texttt{gettext\_quote\_safe} is promoted to
    \texttt{pub gettext\_quote}, and every call site is rewritten by applying
    and every call site is rewritten accordingly.},
  label={lst:tdwe}]
// ===== rust_safe/ (before elimination) =====

// Safe inner function (private)
fn gettext_quote_safe(msgid: &CStr, s: quoting_style) -> *const libc::c_char { /* ... */ }

// Wrapper: accepts raw pointer, converts, delegates
unsafe extern "C" fn gettext_quote(
    mut msgid: *const libc::c_char, mut s: quoting_style,
) -> *const libc::c_char {
    let msgid = CStr::from_ptr(msgid);   // cvt extracted here
    gettext_quote_safe(msgid, s)
}

// Call sites pass raw pointers directly to wrapper
left_quote  = gettext_quote(b"`\0" as *const u8 as *const libc::c_char, quoting_style);
right_quote = gettext_quote(b"'\0" as *const u8 as *const libc::c_char, quoting_style);

// ===== rust_safe_remap/ (after elimination) =====

// Wrapper removed; safe function promoted to public API
pub fn gettext_quote(msgid: &CStr, s: quoting_style) -> *const libc::c_char { /* ... */ }

// Call sites rewritten
left_quote  = gettext_quote(
    unsafe { CStr::from_ptr(b"`\0" as *const u8 as *const libc::c_char) }, quoting_style);
right_quote = gettext_quote(
    unsafe { CStr::from_ptr(b"'\0" as *const u8 as *const libc::c_char) }, quoting_style);
\end{lstlisting}

\paragraph{Compile-and-Test as the Universal Failure Gate}
\label{sec:tdwe-rollback}

TDWE attempts elimination on every wrapper/safe pair and relies on the
same compile-and-test gate that governs Phase~1 function translation
(\S\ref{sec:compile-test}) to decide whether each attempt succeeds.
Before attempting elimination of $f$, TDWE snapshots the affected source
files; after rewriting the call sites and deleting $f$, it compiles the
crate and runs the full test-vector suite.
If both stages pass, the elimination is committed; otherwise the snapshot
is restored and $f$'s wrapper is retained unchanged, preserving the Live
Scaffold Invariant.

This design is consistent with the system's broader correctness principle
and renders pre-emptive classification of ``hard'' cases unnecessary.
Cases that earlier wrapper-elimination approaches handled conservatively
all produce detectable failures under compile-and-test:
\emph{variadic signatures} cannot be given a safe Rust call-site rewrite
and produce a compile error;
\emph{complex return or parameter types} (\eg~\lstinline{Result<T,E>},
slice-of-references) cause type-mismatch errors at rewritten call sites;
\emph{chain dependencies} (a call site inside a not-yet-remapped wrapper)
produce either a type error or a test failure that the gate intercepts.
In every such case the rollback fires and the wrapper is retained without
any special-case logic.

TDWE processes wrappers in the same leaf-first order established by Phase~1
(\S\ref{sec:preprocessing}), so callees are remapped before their callers.
This ordering minimises chain-dependency conflicts without requiring explicit
deferral logic: when a caller's call site is rewritten, its callees have
already been remapped to safe signatures and the types are consistent.

\paragraph{Single Genuine Pre-emptive Skip: Unsafe-Cast Function-Pointer Use}
\label{sec:tdwe-skip}

One category requires a targeted pre-check because compile-and-test alone
cannot detect the failure.
When $f$ is referenced in an \emph{unsafe-cast} expression such as
\lstinline{f as *const ()}, \lstinline{f as *mut ()},
or \lstinline{f as unsafe extern "C" fn(...)}, the
cast discards $f$'s type, so renaming $f$'s signature from raw-pointer to
safe types leaves the cast expression syntactically valid and the crate
continues to compile.
The stored raw address, however, now refers to a function with a different
calling convention, producing undefined behaviour at the call site without
any compile error or test signal unless the unsafe-cast code path happens
to be exercised by the test suite.

For \emph{typed} function-pointer use (\eg~\lstinline{let fp: fn(*const c_char) = f}
or \lstinline{Some(f)}), a signature change produces a type-mismatch compile
error that the rollback gate intercepts correctly; no pre-check is needed.

TDWE therefore applies one targeted scan before attempting elimination:
it searches all source files for occurrences of the pattern
\lstinline{f as *const}/\lstinline{f as *mut}/\lstinline{f as unsafe}.
If any unsafe-cast site referencing $f$ is found, the pair is deferred
to Phase~2's \texttt{wrapper\_removal} task (\S\ref{sec:phase2-wrapper}),
which can retain a minimal \texttt{unsafe extern "C"} shim preserving
the C-ABI address while documenting the remaining unsafe contract.
In all other cases the wrapper is left in place only when the compile-and-test
gate fires, so the crate continues to compile and pass the test suite unchanged.

When the same wrapper/safe pair appears in two different source files
(a duplicate arising from multi-file crates), the pass designates one file as the \emph{canonical} module (chosen
deterministically by lexicographic file path for reproducibility) and
applies full remapping there; the duplicate module has its wrapper and safe
function replaced by a \lstinline{use} re-export of the canonical name,
preserving ABI visibility without duplicating code.

\noindent A final \texttt{reduce\_unsafe\_constructs} pass removes residual
\texttt{unsafe} block wrappers around calls that no longer require them.
Phase~1 thus delivers \texttt{rust\_safe\_remap/}: a crate that
compiles, passes the test suite, and is free of wrapper indirection.
Three categories of unsafe constructs remain deferred to Phase~2 by
their nature, plus a narrow fourth category arising from the single
pre-emptive skip:
\texttt{static mut} globals (require whole-program reasoning),
unsafe-cast wrapper/safe pairs deferred by TDWE's pre-check
(\S\ref{sec:tdwe-skip}), failed struct translations, and functions that
exhausted Phase~1's per-function retry budget.
Phase~2 (\S\ref{sec:phase2}) resolves these through six task types.
The four deferred categories map directly to \texttt{static\_mut},
\texttt{wrapper\_removal}, \texttt{struct\_migration}, and
\texttt{function\_translate} tasks; \texttt{struct\_use\_migrate} is a
follow-on that updates call sites after struct migration completes; and
\texttt{dead\_code} is a final cleanup sweep run after all translations
finish.

\subsection{Phase 2: Agentic Refinement}
\label{sec:phase2}

The four deferred categories from \S\ref{sec:tdwe} are addressed by
Phase~2 (Figure~\ref{fig:pipeline-overview}, right) with an LLM agent equipped with 17 tools.
Three of the four categories require
multi-step, whole-codebase reasoning that inherently exceeds Phase~1's
per-function scope: \texttt{static mut} globals, failed struct
translations, and functions that exhausted the retry budget.
The fourth, unsafe-cast wrapper pairs, was pre-emptively skipped by TDWE
because the compile-and-test gate cannot detect the resulting silent
undefined behaviour; Phase~2 handles these agentically with explicit
unsafe-cast site inspection.
Phase~2 addresses all four with an LLM agent equipped with 17 tools
(Table~\ref{tab:tools}), running an iterative loop that reads, edits,
compiles, and tests until each task passes a verification gate.
Because Phase~1 has already migrated most callers to safe types, Phase~2
transforms code \emph{in place} rather than introducing new wrapper pairs.
Intermediate results are persisted to disk so that an interrupted run can
be resumed without repeating completed tasks.

\paragraph{Task discovery and processing order.}
Phase~2 decomposes residual work into \emph{transformation tasks} and
processes them in a fixed type-level order:
\begin{enumerate}[nosep,leftmargin=*]
  \item \textbf{\texttt{static\_mut}} — one task per \texttt{static mut}
        declaration found by scanning the workspace.
  \item \textbf{\texttt{wrapper\_removal}} — one task per
        \texttt{f}/\texttt{f\_safe} pair that TDWE pre-emptively skipped
        due to an unsafe-cast site referencing~$f$
        (\S\ref{sec:tdwe-skip}), identified by rescanning the workspace
        for the pattern \lstinline{f as *const}/\lstinline{f as *mut}.
  \item \textbf{\texttt{struct\_migration}} — one task per struct that
        Phase~1 failed to translate.
  \item \textbf{\texttt{struct\_use\_migrate}} — one task per function
        whose signature references a raw struct type~$S$ for which
        \texttt{struct\_migration} just produced a safe counterpart~$\hat{S}$;
        updates the function's parameter types and all call sites from~$S$
        to~$\hat{S}$.  This task type always runs immediately after
        \texttt{struct\_migration} is complete.
  \item \textbf{\texttt{function\_translate}} — one task per function that
        exhausted Phase~1's retry budget, sorted leaf-first by call-graph
        order.
  \item \textbf{\texttt{dead\_code}} — a single cleanup task run last,
        after all translations are complete.
\end{enumerate}
The rationale for this order is as follows.
\texttt{static\_mut} elimination runs first so that subsequent tasks
never encounter unresolved global state.
\texttt{wrapper\_removal} runs second: the unsafe-cast wrappers it
handles are already written in terms of Phase~1-safe types, so their
call-site rewrites are independent of struct layout; resolving them
before struct migration avoids ambiguity between raw and safe struct
uses during call-site rewriting.
\texttt{struct\_migration} and \texttt{struct\_use\_migrate} then
complete the struct API before \texttt{function\_translate} begins,
ensuring that function translation tasks can assume a fully migrated
type environment.

\subsubsection{The Agentic Execution Loop}
\label{sec:phase2-loop}

Each task is handled by a single \emph{agent conversation}: an LLM
equipped with a task-specific system prompt and 17 tools, running in a
loop until it calls \texttt{complete\_task} or exhausts the iteration
budget (\texttt{MAX\_ITER} = 40).
The loop is governed by three mechanisms: a verification gate that enforces
behavioral correctness, checkpoints for safe rollback, and stall-detection
heuristics to escape repetitive tool-call patterns.

\paragraph{Tool suite.}
\label{sec:phase2-tools}
Table~\ref{tab:tools} lists the 17 tools grouped into five categories
(Navigation, Modification, Analysis, Verification, Control).

\begin{table}[htbp]
  \caption{Phase~2 tool suite (17 tools). All tools operate on the
           working copy of \texttt{rust\_safe\_remap/}.}
  \label{tab:tools}
  \centering
  \small
  \begin{tabular}{@{}l l p{5.2cm}@{}}
    \toprule
    \textbf{Category} & \textbf{Tool} & \textbf{Description} \\
    \midrule
    \multirow{5}{*}{Navigation}
      & \texttt{grep}                      & Regex search across the workspace; returns matching file:line pairs with surrounding context \\
      & \texttt{read\_file}                & Read file contents with an optional start/end line range \\
      & \texttt{read\_function}            & Extract a complete function body by brace-depth counting \\
      & \texttt{get\_function\_signatures} & List all function signatures declared in a given file \\
      & \texttt{list\_files}               & Enumerate all \texttt{.rs} source files in the workspace \\
    \midrule
    \multirow{5}{*}{Modification}
      & \texttt{replace}                   & Exact-match replacement of a unique string in a single file \\
      & \texttt{regex\_replace}            & Regex-based replacement with capture-group back-references \\
      & \texttt{replace\_function}         & Replace an entire function body identified by its name \\
      & \texttt{delete\_function}          & Delete a function together with its preceding attributes \\
      & \texttt{batch\_replace}            & Atomic multi-file replacement applied in a single operation \\
    \midrule
    \multirow{2}{*}{Analysis}
      & \texttt{find\_static\_mut\_usages} & Classify all read, write, and address-of sites of a \texttt{static mut}; performs signal-reachability analysis \\
      & \texttt{self\_reflect}             & Count remaining unsafe blocks, unsafe functions, and raw pointer occurrences \\
    \midrule
    \multirow{2}{*}{Verification}
      & \texttt{compile}                   & Run \texttt{cargo build} in debug or release mode; return compiler diagnostics on failure \\
      & \texttt{run\_tests}                & Build the release binary and execute the full test-vector suite \\
    \midrule
    \multirow{3}{*}{Control}
      & \texttt{checkpoint}                & Snapshot all \texttt{.rs} files and \texttt{Cargo.toml} to a named restore point \\
      & \texttt{rollback}                  & Restore the workspace to a previously saved named snapshot \\
      & \texttt{complete\_task}            & Signal task completion; triggers the two-stage verification gate \\
    \bottomrule
  \end{tabular}
\end{table}

\paragraph{Verification gate.}
\label{sec:phase2-verify}
The LLM signals completion by calling \texttt{complete\_task(summary)}.
This call does \emph{not} immediately mark the task done; the loop
intercepts it and runs a two-stage gate: (1)~\texttt{cargo build}
(debug): on failure the compile errors are returned as a tool result
and the LLM must fix them before calling \texttt{complete\_task} again;
(2)~\texttt{cargo build --release} followed by all test vectors: on
failure the failing tests are returned.
Only when both stages pass is the task marked \textsc{Completed}.
The gate additionally applies \emph{baseline-aware acceptance}: before
Phase~2 begins, the orchestrator records which test vectors pass on the
unmodified Phase~1 output.
If stage~(2) fails but every failing test also failed in that baseline,
the gate treats the outcome as non-regressing and still marks the task
\textsc{Completed}, preventing pre-existing Phase~1 failures from
permanently blocking otherwise correct transformations.

\begin{claim}[Verification Gate Soundness]
\label{claim:verify-gate}
A task is marked \textsc{Completed} only when \texttt{cargo build}
succeeds and no test vector that passed in the Phase~1 baseline is
caused to fail.  No completed task introduces a new compilation error
or a test regression relative to the Phase~1 baseline.
\end{claim}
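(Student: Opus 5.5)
The argument is a case analysis on the control flow of the agentic loop: show that the only transition into \textsc{Completed} passes through both stages of the verification gate, then read off the two conclusions of Claim~\ref{claim:verify-gate} from the acceptance conditions of those stages. The argument concerns the orchestrator, not the LLM. The LLM may emit arbitrary tool calls, but it has no tool that writes task status directly.

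\textbf{Step 1 (single entry point).} Inspect the tool suite (Table~\ref{tab:tools}) and the loop described in \S\ref{sec:phase2-loop}. The status \textsc{Completed} is assigned in exactly one place: the interception handler for \texttt{complete\_task}. Exhausting \texttt{MAX\_ITER}, stall detection, and \texttt{rollback} never set it. Every other tool (navigation, modification, analysis, \texttt{compile}, \texttt{run\_tests}, \texttt{checkpoint}) only returns a tool result. So it suffices to analyse the handler.

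\textbf{Step 2 (stage 1 gives compilation).} The handler runs \texttt{cargo build} (debug) on the current workspace. On failure it returns the diagnostics as a tool result and does not mark the task. Hence any task marked \textsc{Completed} was marked at a moment when the workspace compiled. I will note explicitly that no edit can interleave between the gate check and the marking, because the loop is sequential: tool calls are processed one at a time.

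\textbf{Step 3 (stage 2 gives no regression).} Let $B$ be the set of test vectors that pass on the unmodified Phase~1 output, recorded before Phase~2 begins. Let $F$ be the set of vectors that fail under \texttt{cargo build --release} plus the test run. The handler marks the task only if $F=\emptyset$, or if every failing test also failed in the baseline, i.e.\ $F\cap B=\emptyset$. In both cases no $t\in B$ fails, which is exactly the first sentence of the claim. The second sentence follows: a new compilation error is excluded by Step~2, and a test regression relative to the baseline is by definition some $t\in B$ newly failing, excluded here. By induction over the fixed task order, the same holds for the whole cumulative workspace after each completed task, since $B$ is fixed once and never recomputed.

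\textbf{Main obstacle.} The hard part is making ``caused to fail'' and ``relative to the Phase~1 baseline'' precise. Test outcomes must be deterministic, so that a vector in $B$ that fails now reflects a real regression rather than flakiness. Release and debug builds must also agree, since stage~1 checks debug while tests run on release. I would state these as explicit assumptions: deterministic test vectors, and a baseline computed against the same harness and release build. The claim is then soundness only with respect to the finite test-vector suite, in line with the best-effort caveat attached to Invariant~\ref{inv:copy}. I would also remark that if a task is rolled back or fails, the workspace is restored to a checkpoint, so non-completed tasks do not affect the induction in Step~3.
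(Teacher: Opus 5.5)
Your proposal is correct and takes the same route as the paper, whose entire proof is ``by construction of the two-stage gate''; your Steps 1--3 just unpack that construction: the only path to \textsc{Completed} is the \texttt{complete\_task} handler, stage~1 forces a successful \texttt{cargo build}, and stage~2 together with the baseline-aware clause forces $F\cap B=\emptyset$. The caveats you make explicit are left implicit in the paper and do not change the argument; these are deterministic test outcomes and consistency between the debug build of stage~1 and the release build of stage~2, which more precisely means a failed release build must be reported as failing vectors rather than as an empty failure set, since otherwise the baseline clause would pass vacuously.
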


\noindent\textit{Proof.} By construction of the two-stage gate
(\S\ref{sec:phase2-verify}).\smallskip

\paragraph{Checkpoint and rollback.}
\label{sec:phase2-checkpoint}
Before each task begins, the orchestrator snapshots the workspace state.
The LLM can roll back to this snapshot at any point to discard partial
edits and restart from a clean state.
If a task exhausts its iteration budget without passing the verification
gate, the orchestrator automatically restores the pre-task snapshot,
ensuring failed tasks do not corrupt the workspace for subsequent tasks.
Completed tasks are never rolled back; subsequent tasks therefore see
exactly the cumulative effect of all previously completed work.

\paragraph{Stall recovery.}
After five consecutive \texttt{compile} failures, the loop injects a
task-specific hint (try a simpler type, rollback, or retain
\texttt{unsafe fn} with a \texttt{// SAFETY:} comment).
Cyclic tool-call patterns are detected via a sliding window of
(tool, argument-hash) fingerprints: period-1 spins
(same call repeated), period-2 alternations ($A,B,A,B,\ldots$), and
period-3 cycles ($A,B,C,A,B,C,\ldots$) each trigger a diagnostic hint
and clear the window.

The complete pseudocode is given in the supplementary material.

\subsubsection{Static Mut Elimination}
\label{sec:phase2-static}

\texttt{static mut} declarations in \textsc{C2Rust} output represent
C global variables; every access requires an \texttt{unsafe} block and
is classified as undefined behaviour without additional synchronisation.
One task is created per unique variable found by scanning the workspace.

\paragraph{Signal-reachability.}
\texttt{find\_static\_mut\_usages} determines whether a variable is
reachable from a signal handler by BFS from registered handlers over
an approximate call graph, and annotates each usage site with a
\texttt{signal\_reachable} flag.

\paragraph{Safe replacement strategy.}
The LLM calls \texttt{find\_static\_mut\_usages("NAME")} to classify all
reads, writes, and address-of sites, then selects the simplest safe
pattern from the following priority-ordered rules:
\begin{enumerate}[nosep,leftmargin=*]
  \item \emph{No writes, compile-time value} $\to$ \lstinline{const X: T = val}
  \item \emph{No writes, runtime-init value} $\to$ \lstinline{static X: T = val} (non-mut)
  \item \emph{Single write, init pattern} $\to$ \lstinline{static X: OnceLock<T>}
  \item \emph{Scalar, any writes} $\to$ \lstinline{static X: AtomicT}
  \item \emph{Complex / non-scalar} $\to$ \lstinline{static X: OnceLock<Mutex<T>>}
\end{enumerate}
Signal-reachable variables are restricted to rules~1, 2, and~4;
\texttt{OnceLock} and \texttt{Mutex} are excluded because they are not
async-signal-safe.
After selecting a safe type, the LLM applies \texttt{batch\_replace} to
update the declaration and every usage site across all files, then
iterates with \texttt{compile} until the crate builds cleanly.

Listing~\ref{lst:atomic-example} shows a representative transformation.

\FloatBarrier
\begin{lstlisting}[language=Rust,
  caption={Before and after Phase~2 \texttt{static\_mut} elimination for
    a counter accessed from signal-handler context.},
  label={lst:atomic-example}]
// BEFORE: c2rust output
static mut lines_printed: libc::c_int = 0 as libc::c_int;
unsafe { lines_printed += 1; }
unsafe { if lines_printed > limit { ... } }

// AFTER: LLM chose AtomicI32 (signal-reachable)
use std::sync::atomic::{AtomicI32, Ordering};
static LINES_PRINTED: AtomicI32 = AtomicI32::new(0);
LINES_PRINTED.fetch_add(1, Ordering::SeqCst);
if LINES_PRINTED.load(Ordering::SeqCst) > limit { ... }
\end{lstlisting}

\subsubsection{Wrapper Removal}
\label{sec:phase2-wrapper}

TDWE eliminates all wrapper/safe pairs except those pre-emptively skipped
due to unsafe-cast function-pointer use (\S\ref{sec:tdwe-skip}).
Phase~2's \texttt{wrapper\_removal} task targets only this narrow residual
set, discovered by rescanning the workspace for \texttt{f}/\texttt{f\_safe}
pairs where at least one \lstinline{f as *const}/\lstinline{f as *mut}
site triggered TDWE's pre-emptive skip.

\paragraph{Removal strategy.}
The LLM inspects every unsafe-cast site and determines whether the
C-ABI address of $f$ must remain stable (\eg~the pointer is passed to
an external C callback registration such as \lstinline{atexit} or
\lstinline{signal}) or whether the cast is incidental and can be
rewritten to use the renamed safe function directly.
If the address must be preserved, the LLM retains a minimal
\texttt{unsafe extern "C"} shim for $f$ and inlines the safe logic into
\texttt{f\_safe}, documenting the retained contract with a
\texttt{// SAFETY:} comment.
If the cast can be rewritten, the LLM updates all unsafe-cast sites,
rewrites direct call sites, deletes the wrapper, and renames
\texttt{f\_safe} to \texttt{f} restoring its original visibility and ABI
attributes, then verifies through the standard compile-and-test gate.

\subsection{Struct Migration and Use Update}
\label{sec:phase2-structs}
\label{sec:phase2-struct-use}

Phase~2 generates one \texttt{struct\_migration} task per struct that
Phase~1 failed to translate, following the same strategy as
\S\ref{sec:struct-trans}: replace pointer fields with owned types,
implement \texttt{From<\&RawS>} (copying, never owning) and
\texttt{to\_raw()}, and append the safe struct in the same file.

Once all \texttt{struct\_migration} tasks complete, one
\texttt{struct\_use\_migrate} task is run per function whose signature
still references a raw struct type $S$ for which a safe counterpart $\hat{S}$
was just produced.
The LLM updates the function's parameter types from $S$ to $\hat{S}$,
rewrites the body to operate on $\hat{S}$ fields directly (using
\texttt{to\_raw()} only at C-ABI boundaries), and propagates call-site
updates across the workspace.
Both task types complete before \texttt{function\_translate} begins,
ensuring a fully migrated type environment for subsequent translation.

\subsubsection{Function Translation}
\label{sec:phase2-functions}

\paragraph{In-place transformation.}
Unlike Phase~1, Phase~2 transforms failed functions \emph{in place}
(no new wrapper pair is introduced), since most callers have already
been migrated to safe types:
\begin{itemize}[nosep,leftmargin=*]
  \item Remove \texttt{unsafe} from the signature (keep \texttt{pub},
        \texttt{extern "C"}, \texttt{\#[no\_mangle]} as present).
  \item Replace raw pointer parameters with safe Rust types
        (\lstinline{*const c_char}$\to$\lstinline{&CStr},
         \lstinline{*mut T}$\to$\lstinline{&mut T}, etc.).
  \item Rewrite the body with safe idioms; wrap unavoidable unsafe
        operations in minimal \lstinline|unsafe { // SAFETY: ... }|
        blocks.
  \item Apply \texttt{batch\_replace} to update all call sites across
        the workspace.
\end{itemize}
This avoids introducing new wrapper pairs that would require further
cleanup.

\paragraph{Translation order.}
Tasks are sorted leaf-first by call-graph order: a function is scheduled
only after its failed callees have been resolved, so each translation can
assume its callees already have safe signatures.

\paragraph{Workflow.}
The LLM reads the function and its call sites, rewrites the function
in place, iterates compile until clean, then signals completion.
For genuine FFI boundaries that cannot be made safe, the LLM retains
\texttt{unsafe fn} with a \texttt{// SAFETY:} comment.

\subsubsection{Dead Code Cleanup}
\label{sec:phase2-dead}

A single \texttt{dead\_code} task runs after all other task types are
complete.
The LLM identifies and removes dead artifacts produced by earlier
translation stages: wrapper functions whose paired safe function was
subsequently inlined or renamed, unused safe structs and their conversion
impls, unused imports, and empty \texttt{extern "C" \{\}} blocks.
Each batch of removals is followed by recompilation to confirm no
regressions.
Formal correctness claims (behavioral preservation and ordering soundness)
are stated in the supplementary material.

\section{Experiments}
\label{sec:experiment}

% ---------------------------------------------------------------
% NARRATIVE SPINE:
%   §3.1 Benchmarks    — what we tested on
%   §3.2 Metrics       — how we measured
%   §3.3 Setup         — model / compute / reproducibility
%   §3.4 RQ1           — overall unsafe reduction (vs. baselines)
%   §3.5 RQ2           — Phase 1 effectiveness (translation rate + TDWE)
%   §3.6 RQ3           — Phase 2 effectiveness (per-task breakdown)
%   §3.7 Limitations
% ---------------------------------------------------------------

% ---------------------------------------------------------------
\subsection{Benchmarks}
\label{sec:benchmarks}
% ---------------------------------------------------------------

We evaluate on two benchmark suites comprising 15 real-world C programs
in total (Table~\ref{tab:coreutils-stats} and Table~\ref{tab:laertes-stats}).
Both suites are used by prior C-to-Rust translation
work~\cite{c2saferrust,laertes,crown}, enabling direct comparison.
A key property of our evaluation is that \emph{all 15 programs} are
evaluated with correctness verification: unlike prior work on the Laertes
suite, which reports safety metrics only because the libraries ship without
test cases, we construct test harnesses and collect inputs via grey-box
fuzzing for the Laertes libraries, enabling end-to-end behavioral
validation on the full benchmark.

\paragraph{GNU Coreutils.}
We use the 7-program GNU Coreutils benchmark introduced by
Nitin et al.~\cite{c2saferrust}.
The programs range from 5,859 (\texttt{pwd}) to 14,423 (\texttt{tail})
lines of C, totalling 65,117 LoC and 1,210 functions, of which 314 are
exercised by the test suite.
Each program ships with 2--30 shell test scripts (63 scripts total) that
were constructed using coverage-guided fuzzing by the benchmark
authors~\cite{c2saferrust}; we use these scripts directly as our
correctness oracle.

\begin{table}[htbp]
  \centering
  \small
  \caption{GNU Coreutils benchmark statistics~\cite{c2saferrust}.
           ``Covered'' = functions exercised by the test suite.}
  \label{tab:coreutils-stats}
  \begin{tabular}{lrrrr}
    \toprule
    Program & LoC & Functions & Covered & Test scripts \\
    \midrule
    \texttt{split}    & 13,848 & 207 &  73 & 12 \\
    \texttt{pwd}      &  5,859 & 127 &  16 &  2 \\
    \texttt{cat}      &  7,460 & 166 &  37 &  4 \\
    \texttt{truncate} &  7,181 & 124 &  33 &  8 \\
    \texttt{uniq}     &  8,299 & 167 &  34 &  3 \\
    \texttt{tail}     & 14,423 & 266 &  76 & 30 \\
    \texttt{head}     &  8,047 & 153 &  45 &  4 \\
    \midrule
    \textbf{Total}    & 65,117 & 1,210 & 314 & 63 \\
    \bottomrule
  \end{tabular}
\end{table}

\paragraph{Laertes benchmark.}
We additionally evaluate on 8 of the 10 libraries in the Laertes
benchmark~\cite{laertes}, excluding \texttt{xzoom} and \texttt{grabc}
because their graphical interfaces preclude automated test execution in a
headless environment.
The remaining 8 libraries range from 49 (\texttt{qsort}) to 106,123
(\texttt{optipng}) lines of C, totalling 132,589 LoC and 1,156 functions.

The Laertes libraries ship without test cases.
To enable correctness verification and to match the end-to-end
evaluation methodology of the Coreutils suite, we write a lightweight
driver harness for each library that feeds arbitrary byte sequences from
the fuzzer to the library's public API, then collect test inputs using
\textsc{AFL++}~\cite{aflplusplus}.
Each candidate input is first executed against the original C binary;
inputs that crash the C binary or produce non-deterministic output across
two identical runs are discarded, retaining only inputs for which the C
binary yields a stable, reproducible output to serve as a ground-truth
oracle.
From the surviving inputs we select 1,000 per library.
The resulting inputs serve both as the correctness oracle for Phase~1's
compile-test loop and as the baseline for Phase~2's verification gate.

\begin{table}[htbp]
  \centering
  \small
  \caption{Laertes benchmark statistics~\cite{laertes}.
           Test inputs are collected via AFL++ fuzzing (1,000 per library).}
  \label{tab:laertes-stats}
  \begin{tabular}{lrrrr}
    \toprule
    Library & LoC & Functions & Covered & Test inputs \\
    \midrule
    \texttt{tulipindicators} & 12,565 & 270 & 211 & 1,000 \\
    \texttt{optipng}         & 106,123 & 495 & 224 & 1,000 \\
    \texttt{bzip2}           &  4,417 &  65 &  31 & 1,000 \\
    \texttt{snudown}         &  5,426 & 140 &  72 & 1,000 \\
    \texttt{lil}             &  3,565 & 148 &  72 & 1,000 \\
    \texttt{genann}          &    410 &  13 &   7 & 1,000 \\
    \texttt{urlparser}       &     34 &  21 &  10 & 1,000 \\
    \texttt{qsort}           &     49 &   4 &   4 & 1,000 \\
    \midrule
    \textbf{Total}           & 132,589 & 1,156 & 631 & 8,000 \\
    \bottomrule
  \end{tabular}
\end{table}

% ---------------------------------------------------------------
\subsection{Metrics}
\label{sec:metrics}
% ---------------------------------------------------------------

We measure translation quality along four dimensions.

\paragraph{Safety metrics.}
To enable direct comparison with C2SaferRust~\cite{c2saferrust}, we
adopt their five safety metrics:
\emph{raw pointer declarations} (\texttt{*const T} / \texttt{*mut T} type annotations),
\emph{raw pointer dereferences} (\texttt{*<expr>} occurrences in executable code),
\emph{unsafe lines of code} (total source lines enclosed in \texttt{unsafe\{\}} blocks),
\emph{unsafe type casts} (\texttt{transmute} calls and unsafe coercions), and
\emph{unsafe call expressions} (function calls that appear inside unsafe blocks).
The five metrics capture distinct facets of unsafety: a tool may
eliminate unsafe blocks while leaving raw pointer declarations intact,
or reduce dereferences without removing the enclosing unsafe scope.
Reporting all five prevents misleading single-metric summaries and allows
fine-grained comparison.
For each program we report the absolute count under each metric before
and after translation and compute the percentage reduction relative to the
\textsc{C2Rust} baseline.

\paragraph{Function compilance pass rate.}
We define the \emph{function compilance pass rate} as the fraction of
all functions whose LLM-based translation compiled successfully within
the tool's retry budget:
\[
  \text{Function comiliance pass rate} =
    \frac{\text{functions whose translation compiled}}
         {\text{total functions}}.
\]
A function counts as passing if the translated code compiles; it counts
as failing if every attempt results in a compilation error and the
tool retains or falls back to the original unsafe body.
The function compilance pass rate is applicable to any LLM-based translation
tool that attempts per-function translation with compilation verification,
including both \textsc{Encrust} and \textsc{C2SaferRust}, which both perform
incremental function-level translation with compile-and-test checking to
maintain functional equivalence throughout the process.
This metric measures the compile success rate of LLM-driven translation,
independent of whether the resulting code passes test vectors
(which is captured separately by the functional correctness metric).
Crucially, the function compilance pass rate is also independent of the five
safety metrics: a function whose unsafe body is retained due to a failed
translation may still contribute to safety metric reduction through other
transformations such as struct migration.

\paragraph{Functional correctness.}
We verify end-to-end behavioral equivalence between the translated Rust
binary and the original C program using two complementary counts.
The \emph{test-script pass rate} measures the fraction of test
scripts that pass: for the Coreutils programs, these are the 63 shell
scripts provided by the benchmark~\cite{c2saferrust}; for the Laertes
libraries, we count the fraction of the 1,000 AFL++-generated test inputs
on which the translated binary agrees with the C oracle.
The \emph{test-vector pass rate} reports the same agreement at the
granularity of individual input/output pairs, which is the unit used by
Phase~1's compile-test loop and Phase~2's verification gate.
By the Live Scaffold Invariant (Invariant~\ref{inv:scaffold}), both
measures must equal 100\% at every intermediate step and in the final
output; any deviation constitutes a correctness regression.
Reporting functional correctness for \emph{both} suites is a direct
advantage over prior work: C2SaferRust~\cite{c2saferrust},
Laertes~\cite{laertes}, and Crown~\cite{crown} cannot provide correctness
verification for the Laertes libraries because those libraries ship without
test inputs.

\paragraph{Idiomatic Rust quality.}
Beyond eliminating unsafe constructs, a high-quality C-to-Rust translation
should produce code that follows idiomatic Rust conventions.
We measure \emph{idiomatic quality} by the number of \textsc{Clippy} warnings
emitted on the final translated crate:
\[
  \text{Clippy warnings} = \texttt{cargo clippy 2>\&1 | grep -c "\^{}warning"}.
\]
\textsc{Clippy} is the official Rust linter and flags patterns that compile
correctly but deviate from community best practices, including redundant
clones, inefficient iterator usage, unnecessary unsafe blocks that wrap
safe operations, and opportunities to replace raw-pointer patterns with
idiomatic slice or reference APIs.
Fewer warnings indicate that the translated code is not merely safe but
also conforms to the style expected by Rust developers, which directly
affects maintainability.
A lower Clippy warning count is better; a count of zero means
\textsc{Clippy} raises no objections against the translated codebase.
Unlike the five safety metrics, the Clippy warning count is independent of
whether \texttt{unsafe} blocks are present: safe but non-idiomatic code
(e.g., manual index loops instead of iterators) also raises warnings.
We report this metric for \textsc{Encrust}'s final output and for each
baseline to characterise the idiomaticity gap between approaches.

% ---------------------------------------------------------------
\subsection{Experimental Setup}
\label{sec:setup}
% ---------------------------------------------------------------

\paragraph{LLM configuration.}
All translation steps in both phases use GPT-4o.
Phase~1 allows up to five compile-test retries per function: on each
retry the LLM receives the compiler error or test-failure output as
additional context and regenerates the wrapper/safe-function pair.
If all five attempts fail, the pipeline rolls back to the original unsafe
body and advances to the next function.
Phase~2 caps the agentic loop at 40 tool-call iterations per task; if the
loop reaches the cap without the verification gate passing, the workspace
is rolled back to the pre-task snapshot and the task is marked failed.
All LLM calls use temperature~$= 0.0$ to make the pipeline deterministic
conditioned on the model's weights and prompt.

\paragraph{Baselines.}
We compare \textsc{Encrust} against three baselines, all run on the same
Coreutils and Laertes programs under identical conditions.
\textbf{(B1)~\textsc{C2Rust}}~\cite{c2rust}: the raw transpiler output
with no further safety work; this serves as the lower bound and defines
the safety-metric baselines from which all tools reduce.
\textbf{(B2)~\textsc{C2SaferRust}}~\cite{c2saferrust}: a neuro-symbolic
LLM-based pipeline evaluated on the same two benchmark suites with the
same five safety metrics, enabling direct numeric comparison.
\textbf{(B3)~\textsc{EvoC2Rust}}~\cite{evoc2rust}: a skeleton-guided,
project-scale LLM pipeline that first generates compilable Rust stubs for
the entire project and then fills each stub with a full function
translation, making it the most architecturally comparable large-scale
baseline.
To ensure a controlled comparison, we run both \textsc{C2SaferRust} and
\textsc{EvoC2Rust} with GPT-4o at temperature~$= 0.0$, matching the
configuration used by \textsc{Encrust}.
Unlike \textsc{C2Rust} and \textsc{C2SaferRust}, \textsc{EvoC2Rust}
translates C directly to Rust without relying on \textsc{C2Rust} as an
intermediate; when it fails to translate a function, that function is
absent from the output rather than present as unsafe code.
Measuring the five safety metrics on an incomplete program would therefore
undercount unsafe constructs and artificially inflate \textsc{EvoC2Rust}'s
apparent safety score.
To ensure a fair comparison, we apply the following normalisation: for
every function that \textsc{EvoC2Rust} fails to produce, we substitute
the \textsc{C2Rust}-transpiled unsafe body of that function.
Safety metrics are then measured on this completed, compilable crate,
so that all three baselines and \textsc{Encrust} are evaluated on programs
with the same set of functions.

We evaluate \textsc{Encrust} along two dimensions.
(1) Overall safety improvement covers Phase~1 translation coverage
(function compilance pass rate), functional correctness preservation, and
unsafe-construct reduction relative to baselines across five safety metrics,
reported jointly in \S\ref{sec:safety}. And
(2) the marginal contribution of Phase~2 agentic refinement, examined as an
ablation (\S\ref{sec:phase2-contribution}).
All experiments use the benchmarks and metrics defined in \S\ref{sec:benchmarks}
and \S\ref{sec:metrics}.

\begin{table*}[htbp]
\centering
\caption{Safety and correctness metrics on the \textbf{GNU Coreutils} benchmark.
  Columns (all $\downarrow$ lower is better unless noted):
  \emph{Ptr.Decl}~=~raw pointer declarations;
  \emph{Ptr.Deref}~=~raw pointer dereferences;
  \emph{Unsafe LoC}~=~unsafe lines of code;
  \emph{Unsafe Cast}~=~unsafe type casts;
  \emph{Unsafe Call}~=~unsafe call expressions;
  \emph{Comp.Rate}~($\uparrow$)~=~fraction of LLM-translated functions that compiled
  (N/A for systems without LLM function translation);
  \emph{Correct.}~($\uparrow$)~=~test-script pass rate;
  \emph{Clippy}~=~\texttt{cargo clippy} warning count.
  \textsc{EvoC2Rust} Correct.~=~0\% and Clippy~=~N/A because the translated
  project does not compile.
  Best value per metric per program \textbf{bolded}; second-best \underline{underlined}.}
\label{tab:safety-coreutils}
\resizebox{\textwidth}{!}{%
\begin{tabular}{ll rrrrr r r r}
\toprule
Program & System & Ptr.Decl $\downarrow$ & Ptr.Deref $\downarrow$ & Unsafe LoC $\downarrow$ & Unsafe Cast $\downarrow$ & Unsafe Call $\downarrow$ & Comp.Rate $\uparrow$ & Correct. $\uparrow$ & Clippy $\downarrow$ \\
\midrule
\multirow{4}{*}{cat}
 & \textsc{C2Rust}      & 192 & 317 & 5{,}625 & 3{,}116 & 1{,}038 & N/A    & \textbf{100\%} & 546 \\
 & \textsc{C2SaferRust} & 120 & 240 & 4{,}189 & 1{,}992 &   \underline{912}   & \underline{75.3\%} & \textbf{100\%} & \underline{239} \\
 & \textsc{EvoC2Rust}   &  \textbf{81} & \underline{231} & \textbf{2{,}338} & \textbf{1{,}082} & \textbf{525} & 64\% & 0\% & N/A \\
 & \textsc{Encrust}     &  \underline{95} & \textbf{170} & \underline{3{,}407} & \underline{1{,}463} & 1{,}076 & \textbf{95.8\%} & \textbf{100\%} & \textbf{211} \\
\cmidrule(l){1-10}
\multirow{4}{*}{head}
 & \textsc{C2Rust}      & 192 & 442 & 6{,}245 & 3{,}488 & 1{,}378 & N/A    & \textbf{100\%} & 410 \\
 & \textsc{C2SaferRust} & 141 & 351 & 4{,}699 & 2{,}464 & 1{,}189 & \underline{69.9\%} & \textbf{100\%} & \underline{247} \\
 & \textsc{EvoC2Rust}   &  \textbf{65} & \underline{309} & \underline{3{,}525} & \underline{1{,}799} & \textbf{902} & 59\% & 0\% & N/A \\
 & \textsc{Encrust}     & \underline{109} & \textbf{155} & \textbf{2{,}244} & \textbf{806} & \underline{1{,}071} & \textbf{97.1\%} & \textbf{100\%} & \textbf{136} \\
\cmidrule(l){1-10}
\multirow{4}{*}{pwd}
 & \textsc{C2Rust}      & 164 & 295 & 4{,}201 & 2{,}248 &   875 & N/A    & \textbf{100\%} & 370 \\
 & \textsc{C2SaferRust} & 129 & 225 & 3{,}151 & 1{,}563 &   871 & \underline{70.9\%} & \textbf{100\%} & \underline{204} \\
 & \textsc{EvoC2Rust}   &  \textbf{62} & \underline{178} & \underline{1{,}756} &  \underline{575} & \textbf{548} & 68\% & 0\% & N/A \\
 & \textsc{Encrust}     &  \underline{80} & \textbf{49} & \textbf{1{,}180} & \textbf{228} &   \underline{748} & \textbf{99.2\%} & \textbf{100\%} & \textbf{108} \\
\cmidrule(l){1-10}
\multirow{4}{*}{split}
 & \textsc{C2Rust}      & 252 & 656 & 11{,}324 & 5{,}979 & 2{,}353 & N/A    & \textbf{100\%} & 547 \\
 & \textsc{C2SaferRust} & 214 & 540 &  9{,}250 & 4{,}663 & 2{,}212 & 53.6\% & \textbf{100\%} & \underline{394} \\
 & \textsc{EvoC2Rust}   &  \textbf{91} & \underline{437} & \textbf{6{,}022} & \underline{3{,}107} & \textbf{1{,}383} & \underline{58\%} & 0\% & N/A \\
 & \textsc{Encrust}     & \underline{135} & \textbf{295} & \underline{6{,}645} & \textbf{3{,}090} & \underline{2{,}099} & \textbf{96.2\%} & \textbf{100\%} & \textbf{211} \\
\cmidrule(l){1-10}
\multirow{4}{*}{tail}
 & \textsc{C2Rust}      & 389 & 1{,}092 & 11{,}663 & 5{,}869 & 2{,}580 & N/A    & \textbf{100\%} & 718 \\
 & \textsc{C2SaferRust} & 297 &   847 &  8{,}818 & \underline{3{,}739} & 2{,}433 & \underline{61.7\%} & \textbf{100\%} & \underline{469} \\
 & \textsc{EvoC2Rust}   & \textbf{173} & \underline{812} & \underline{7{,}870} & 3{,}805 & \textbf{1{,}918} & 49\% & 0\% & N/A \\
 & \textsc{Encrust}     & \underline{212} & \textbf{559} & \textbf{6{,}352} & \textbf{2{,}383} & \underline{2{,}231} & \textbf{96.8\%} & \textbf{100\%} & \textbf{266} \\
\cmidrule(l){1-10}
\multirow{4}{*}{truncate}
 & \textsc{C2Rust}      & 156 & 326 & 5{,}544 & 3{,}357 & 1{,}040 & N/A    & \textbf{100\%} & 386 \\
 & \textsc{C2SaferRust} & 124 & 263 & 4{,}443 & 2{,}521 &   \underline{952}   & \underline{70.2\%} & \textbf{100\%} & \underline{241} \\
 & \textsc{EvoC2Rust}   &  \textbf{49} & \underline{203} & \textbf{2{,}808} & \underline{1{,}733} & \textbf{543} & 66\% & 0\% & N/A \\
 & \textsc{Encrust}     &  \underline{76} & \textbf{174} & \underline{3{,}427} & \textbf{1{,}586} & 1{,}074 & \textbf{94.0\%} & \textbf{100\%} & \textbf{214} \\
\cmidrule(l){1-10}
\multirow{4}{*}{uniq}
 & \textsc{C2Rust}      & 227 & 343 & 6{,}066 & 3{,}590 & 1{,}150 & N/A    & \textbf{100\%} & 470 \\
 & \textsc{C2SaferRust} & 166 & 250 & 4{,}397 & 2{,}340 & \underline{1{,}025} & \underline{68.9\%} & \textbf{100\%} & \textbf{293} \\
 & \textsc{EvoC2Rust}   &  \textbf{67} & \underline{215} & \underline{2{,}656} & \underline{1{,}429} & \textbf{556} & 64\% & 0\% & N/A \\
 & \textsc{Encrust}     & \underline{122} & \textbf{87} & \textbf{2{,}469} & \textbf{1{,}070} & 1{,}037 & \textbf{95.9\%} & \textbf{100\%} & \underline{349} \\
\midrule
\multirow{4}{*}{\textit{Total}}
 & \textsc{C2Rust}      & 1{,}572 & 3{,}471 & 50{,}668 & 27{,}647 & 10{,}414 & N/A    & \textbf{100\%} & 2{,}951 \\
 & \textsc{C2SaferRust} & 1{,}191 & 2{,}716 & 38{,}947 & 19{,}282 &  9{,}594 & \underline{66.0\%} & \textbf{100\%} & \underline{2{,}087} \\
 & \textsc{EvoC2Rust}   &   \textbf{588} & \underline{2{,}385} & \underline{26{,}975} & \underline{13{,}530} & \textbf{6{,}375} & 59.7\% & 0\% & N/A \\
 & \textsc{Encrust}     &   \underline{829} & \textbf{1{,}489} & \textbf{25{,}724} & \textbf{10{,}626} & \underline{9{,}336} & \textbf{96.4\%} & \textbf{100\%} & \textbf{1{,}495} \\
\bottomrule
\end{tabular}}
\end{table*}

% ---------------------------------------------------------------
\subsection{Evaluation}
\label{sec:safety}
% ---------------------------------------------------------------

Tables~\ref{tab:safety-coreutils} and~\ref{tab:safety-laertes} report
the five safety metrics for all four systems on the Coreutils and Laertes
benchmarks respectively; Table~\ref{tab:safety-laertes} additionally reports
the function compilance pass rate, functional correctness, and Clippy warning count.

\paragraph{Safety reduction on Coreutils.}
Across all seven Coreutils programs \textsc{Encrust} reduces raw pointer
dereferences by 55\% and unsafe type casts by 60\% relative to the
\textsc{C2Rust} baseline, the two metrics where the wrapper pattern is most
effective: safe inner functions eliminate pointer arithmetic entirely, and
struct materialisation via \texttt{From} removes explicit \texttt{as} casts.
Raw pointer declarations and unsafe lines of code are reduced by 44\% and
46\% respectively.
Unsafe call expressions show only a 7\% reduction, because each wrapper
function itself constitutes an unsafe call site; this metric therefore
improves primarily through Phase~2 wrapper removal.

Compared with \textsc{C2SaferRust}, \textsc{Encrust} reduces raw pointer
dereferences by a further 42\% and unsafe type casts by a further 42\%,
reflecting the structural advantage of the wrapper pattern over
\textsc{C2SaferRust}'s inline pointer annotation approach.
\textsc{EvoC2Rust} reports nominally lower raw pointer declaration, unsafe
lines, and unsafe call expression totals, but produces output that does not
compile (functional correctness = 0\% on all seven programs); its safety
numbers are therefore not directly comparable.
\textsc{Encrust} is the only system achieving both unsafe-construct
reduction and 100\% test correctness across the full Coreutils benchmark.

\paragraph{Safety reduction on Laertes.}
Results for Laertes cover all eight libraries.
Over the eight completed libraries, \textsc{Encrust} reduces raw pointer
dereferences by 57\% and unsafe type casts by 38\% relative to \textsc{C2Rust},
consistent with the Coreutils trend.
\textit{qsort} and \textit{snudown} show the largest relative reductions
(raw pointer dereferences near zero for qsort), while \textit{lil} and
\textit{urlparser} show more modest gains owing to their heavy use of
multi-level pointers and function callbacks that the wrapper pattern does
not fully eliminate.
\textsc{EvoC2Rust} achieves lower totals on the Laertes suite, but
correctness verification is unavailable for its output on these libraries,
so the comparison is one-dimensional.

\begin{table*}[htbp]
\centering
\caption{Safety and correctness metrics on the \textbf{Laertes} benchmark.
  Column definitions as in Table~\ref{tab:safety-coreutils}.
  \emph{Comp.Rate}: fraction of LLM-translated functions that compiled within
  the retry budget (\textsc{Encrust} and \textsc{EvoC2Rust});
  N/A for \textsc{C2Rust} (no LLM translation).
  \emph{Correct.}: test-vector pass rate (100\% = all inputs match C reference output).
  All 8 libraries have \textsc{Encrust} results.
  Best value per metric per library \textbf{bolded}; second-best \underline{underlined}.}
\label{tab:safety-laertes}
\resizebox{\textwidth}{!}{%
\begin{tabular}{ll rrrrr r c r}
\toprule
Library & System & Ptr.Decl $\downarrow$ & Ptr.Deref $\downarrow$ & Unsafe LoC $\downarrow$ & Unsafe Cast $\downarrow$ & Unsafe Call $\downarrow$ & Comp.Rate $\uparrow$ & Correct. $\uparrow$ & Clippy $\downarrow$ \\
\midrule
\multirow{4}{*}{bzip2}
 & \textsc{C2Rust}      & 147 & 3{,}614 & 9{,}304 & 6{,}601 & 1{,}686 & N/A  & \textbf{100\%} & \underline{254} \\
 & \textsc{C2SaferRust} &  85 & 2{,}217 & 7{,}688 & 3{,}397 & 1{,}252 & 57.58  & \textbf{100\%} & 417 \\
 & \textsc{EvoC2Rust}   & \textbf{65} & \underline{391} & \textbf{1{,}459} & \textbf{548} & \textbf{327} & \underline{83.00}  &   0\% & N/A \\
 & \textsc{Encrust}     &  \underline{79} & \textbf{384} & \underline{1{,}907} & \underline{886} & \underline{572} & \textbf{93.94} & \textbf{100\%} & \textbf{146} \\
\cmidrule(l){1-10}
\multirow{4}{*}{genann}
 & \textsc{C2Rust}      &  48 & 158 & 557 & 231 & 116 & N/A  & \textbf{100\%} & \underline{32} \\
 & \textsc{C2SaferRust} &  38 & 127 & 479 & 174 & \underline{102} & 57.14  & \textbf{100\%} & \textbf{29} \\
 & \textsc{EvoC2Rust}   &  \textbf{3} &   \textbf{6} & \textbf{111} &  \textbf{57} &  \textbf{19} & \underline{80.65}  &   0\% & N/A \\
 & \textsc{Encrust}     & \underline{19} & \underline{39} & \underline{333} & \underline{92} & 127 & \textbf{92.86} & \textbf{100\%} & 59 \\
\cmidrule(l){1-10}
\multirow{4}{*}{lil}
 & \textsc{C2Rust}      & 416 & 1{,}674 & 5{,}416 & 2{,}231 & 1{,}729 & N/A  & \textbf{100\%} & \textbf{583} \\
 & \textsc{C2SaferRust} & 391 &   983 & 4{,}519 & 1{,}241 & \underline{1{,}651}   & \underline{47.65}  & \textbf{100\%} & \underline{670} \\
 & \textsc{EvoC2Rust}   &   \textbf{4} &   \textbf{2} &  \textbf{61} &  \textbf{25} &  \textbf{23} & 23.28  &   0\% & N/A \\
 & \textsc{Encrust}     & \underline{382} & \underline{887} & \underline{3{,}012} & \underline{800} & 1{,}751 & \textbf{97.32} & \textbf{100\%} & 698 \\
\cmidrule(l){1-10}
\multirow{4}{*}{optipng}
 & \textsc{C2Rust}      & 1{,}157 & 5{,}297 & 49{,}444 & 26{,}824 & 7{,}044 & N/A  & \textbf{100\%} & \underline{4{,}964} \\
 & \textsc{C2SaferRust} & \textbf{625} & \underline{3{,}452} & \textbf{34{,}178} & \underline{20{,}744} & \underline{5{,}974} & \underline{65.32}  & \textbf{100\%} & 4{,}984 \\
 & \textsc{EvoC2Rust}   &   820 & 4{,}488 & 40{,}486 & 22{,}775 & \textbf{5{,}217} & 63.59  &   0\% & N/A \\
 & \textsc{Encrust}     & \underline{634} & \textbf{2{,}087} & \underline{37{,}111} & \textbf{18{,}548} & 6{,}117 & \textbf{96.76} & \textbf{100\%} & \textbf{4{,}555} \\
\cmidrule(l){1-10}
\multirow{4}{*}{qsort}
 & \textsc{C2Rust}      &   4 &  11 &  67 &  34 &  21 & N/A  & \textbf{100\%} & \underline{7} \\
 & \textsc{C2SaferRust} &   \underline{1} &   \textbf{0} &  \textbf{14} &   \textbf{5} &   \textbf{4} & 80.00  & \textbf{100\%} & \textbf{6} \\
 & \textsc{EvoC2Rust}   &   \textbf{0} &   \underline{1} &  \underline{26} &  21 &   \underline{8} & \underline{87.50}  &   0\% & N/A \\
 & \textsc{Encrust}     &   \underline{1} &   \textbf{0} &  45 &  \underline{19} &  22 & \textbf{100.00} & \textbf{100\%} & 16 \\
\cmidrule(l){1-10}
\multirow{4}{*}{snudown}
 & \textsc{C2Rust}      &  56 & 287 & 1{,}678 & 1{,}809 & 449 & N/A  & \textbf{100\%} & 238 \\
 & \textsc{C2SaferRust} &  \underline{21} & 113 &   869 &   591 & 313   & 74.19  & \textbf{100\%} & \underline{96} \\
 & \textsc{EvoC2Rust}   &   \textbf{2} &  \textbf{11} &  \textbf{42} &  \textbf{30} &  \textbf{17} & \underline{82.66}  &   0\% & N/A \\
 & \textsc{Encrust}     &  25 &  \underline{99} &   \underline{410} &   \underline{209} & \underline{199}   & \textbf{93.55} & \textbf{100\%} & \textbf{69} \\
\cmidrule(l){1-10}
\multirow{4}{*}{tulipindicators}
 & \textsc{C2Rust}      & 1{,}081 & 3{,}047 & 19{,}408 & 12{,}349 & 3{,}336 & N/A  & \textbf{100\%} & 943 \\
 & \textsc{C2SaferRust} &   977 & 2{,}652 & 18{,}426 & 11{,}499 & 3{,}045   & 42.43  & \textbf{100\%} & \underline{727} \\
 & \textsc{EvoC2Rust}   & \textbf{149} & \textbf{119} & \textbf{702} & \textbf{291} & \textbf{157} & \underline{53.49}  &   0\% & N/A \\
 & \textsc{Encrust}     &   \underline{865} & \underline{1{,}169} & \underline{15{,}781} & \underline{8{,}716} & \underline{3{,}031}   & \textbf{92.25} & \textbf{100\%} & \textbf{714} \\
\cmidrule(l){1-10}
\multirow{4}{*}{urlparser}
 & \textsc{C2Rust}      &  82 &  89 & 1{,}627 &   \underline{573} & 832 & N/A  & \textbf{100\%} & \underline{153} \\
 & \textsc{C2SaferRust} &  \underline{76} &  \textbf{36} & \textbf{1{,}039} &   \textbf{394} & \underline{517} & 36.36  & \textbf{100\%} & \textbf{105} \\
 & \textsc{EvoC2Rust}   &   \textbf{4} &  \textbf{36} & \underline{1{,}090} &   575 & \textbf{295} & \underline{90.00}  &   0\% & N/A \\
 & \textsc{Encrust}     &  \underline{76} &  \underline{64} & 1{,}591 &   771 & 663  & \textbf{90.91} & \textbf{100\%} & 196 \\
\midrule
\multirow{4}{*}{\textit{Total}}
 & \textsc{C2Rust}      & 2{,}991 & 14{,}177 & 87{,}501 & 50{,}652 & 15{,}213 & N/A  & \textbf{100\%} & 7{,}174 \\
 & \textsc{C2SaferRust} & 2{,}214 &  9{,}580 & 67{,}212 & 38{,}045 & 12{,}858  & 56.07  & \textbf{100\%} & \underline{7{,}034} \\
 & \textsc{EvoC2Rust}   & \textbf{1{,}047} & \underline{5{,}054} & \textbf{43{,}977} & \textbf{24{,}322} & \textbf{6{,}063} & \underline{61.74}  &   0\% & N/A \\
 & \textsc{Encrust}     & \underline{2{,}081} &  \textbf{4{,}729} & \underline{60{,}190} & \underline{30{,}041} & \underline{12{,}482}  & \textbf{95.09} & \textbf{100\%} & \textbf{6{,}453} \\
\bottomrule
\end{tabular}}
\end{table*}

\paragraph{Phase~1 translation coverage.}
Across all seven Coreutils programs, \textsc{Encrust} successfully translated
999 of 1,097 functions in Phase~1 (within the five-retry budget), and
Phase~2 \texttt{function\_translate} tasks recovered an additional 59
functions, yielding an overall function compilance pass rate of \textbf{96.4\%} (1,058/1,097).
Per-program rates range from 94.0\% (\texttt{truncate}) to 99.2\%
(\texttt{pwd}), with the remaining 3.6\% of functions retained as their
original unsafe bodies because every LLM attempt either failed to compile
or failed the test-vector suite within the combined retry budget of both
phases.

\paragraph{Functional correctness.}
\textsc{Encrust} achieves 100\% correctness on all seven Coreutils programs
and all eight Laertes libraries (Tables~\ref{tab:safety-coreutils}--\ref{tab:safety-laertes}),
confirming that the Live Scaffold Invariant (\S\ref{sec:compile-test}) prevents
any behavioral regression throughout translation.
By contrast, \textsc{EvoC2Rust} scores 0\% on both benchmarks because its
translated projects do not compile.

\begin{table}[htbp]
\centering
\caption{Ablation: incremental contribution of each phase, aggregated
  over the Coreutils (7 programs) and Laertes (8 libraries) benchmarks.
  \textsc{Encrust} Phase~1 = after Type-Directed Wrapper Elimination only;
  \textsc{Encrust} (Phase~1+2) = full pipeline.
  All metrics $\downarrow$ lower is better.}
\label{tab:ablation}
\small
\setlength{\tabcolsep}{4pt}
\begin{tabular}{ll rrrrr r}
\toprule
Suite & Configuration & Ptr.Decl & Ptr.Deref & Unsafe LoC & Unsafe Cast & Unsafe Call & Clippy \\
\midrule
\multirow{3}{*}{Coreutils}
 & \textsc{C2Rust}                      & 1{,}572 & 3{,}471 & 50{,}668 & 27{,}647 & 10{,}414 & 2{,}951 \\
 & \textsc{Encrust} Phase~1             &   886 & 1{,}585 & 27{,}424 & 11{,}227 &  9{,}659 & 1{,}463 \\
 & \textsc{Encrust} (Phase~1+2)         &   829 & 1{,}489 & 25{,}724 & 10{,}626 &  9{,}336 & 1{,}495 \\
\cmidrule(l){1-8}
\multirow{3}{*}{Laertes}
 & \textsc{C2Rust}                      & 2{,}991 & 14{,}177 & 87{,}501 & 50{,}652 & 15{,}213 & 7{,}174 \\
 & \textsc{Encrust} Phase~1             & 2{,}269 &  8{,}278 & 69{,}009 & 36{,}107 & 13{,}795 & 6{,}465 \\
 & \textsc{Encrust} (Phase~1+2)         & 2{,}081 &  4{,}729 & 60{,}190 & 30{,}041 & 12{,}482 & 6{,}453 \\
\bottomrule
\end{tabular}
\end{table}

% ---------------------------------------------------------------

\begin{table}[htbp]
\centering
\caption{Phase~2 task completion breakdown aggregated over all 15 programs
  (7 Coreutils + 8 Laertes).
  \emph{Rate} = Completed / Discovered.
  \emph{Avg.\ iters} = mean agentic loop iterations per completed task.}
\label{tab:phase2-tasks}
\small
\begin{tabular}{lrrrr r}
\toprule
Task type & Disc. & Compl. & Failed & Rate & Avg.\ iters \\
\midrule
\texttt{static\_mut}          & 363 & 227 & 136 & 62.5\% & 21.8 \\
\texttt{wrapper\_removal}     & 454 & 347 & 107 & 76.4\% & 20.6 \\
\texttt{struct\_migration}    &  23 &  21 &   2 & 91.3\% & 10.7 \\
\texttt{struct\_use\_migrate} &  15 &  14 &   1 & 93.3\% &  3.8 \\
\texttt{function\_translate}  & 263 & 174 &  89 & 66.2\% & 23.7 \\
\texttt{dead\_code}           &  15 &   7 &   8 & 46.7\% & 18.3 \\
\midrule
\textbf{Total}                & 1{,}133 & 790 & 343 & 69.7\% & 21.0 \\
\bottomrule
\end{tabular}
\end{table}
\subsection{Ablation Study: Phase~2 Contribution}
\label{sec:phase2-contribution}
% ---------------------------------------------------------------

To isolate the contribution of each phase, Table~\ref{tab:ablation}
compares three configurations on the five safety metrics and Clippy
warning count: \textsc{C2Rust} (no translation), Phase~1 only
(\texttt{rust\_safe\_remap/}, after Type-Directed Wrapper Elimination),
and the full \textsc{Encrust} pipeline (Phase~1 + Phase~2).
Phase~1 alone reduces unsafe lines of code by 45.9\% on Coreutils and
21.1\% on Laertes relative to \textsc{C2Rust}, with the TDWE remapping
step accepting 79.9\% (1{,}210/1{,}514) of wrapper--safe function pairs.
Raw pointer dereferences see the largest Phase~1 reduction (54.3\% on
Coreutils, 41.6\% on Laertes), while unsafe call expressions improve only
modestly (7.2\% and 9.3\% respectively), since each retained wrapper
constitutes an unsafe call site.
Phase~2 adds a further 6.1\% relative reduction in raw pointer
dereferences on Coreutils (total 57.1\% over \textsc{C2Rust}) and 42.9\%
on Laertes (total 66.6\%), and reduces unsafe lines of code by a further
6.2\% and 12.8\% respectively, with all five metrics improving across
both benchmark suites.

Table~\ref{tab:phase2-tasks} shows the Phase~2 task breakdown aggregated
over all 15 programs.
\textsc{Encrust} discovers 1{,}133 tasks and completes 790 (69.7\%),
averaging 21.0 agentic loop iterations per completed task.
Structural tasks achieve the highest rates: \texttt{struct\_migration}
at 91.3\% and \texttt{struct\_use\_migrate} at 93.3\%, owing to their
localised, well-typed rewrites (mean 10.7 and 3.8 iterations).
\texttt{dead\_code} cleanup is lowest (46.7\%) because whole-project
reachability reasoning frequently exceeds the 40-iteration budget.
\texttt{wrapper\_removal} is the most frequent task type (454 tasks)
and achieves 76.4\%, as most residual wrapper--safe pairs have tractable
type mismatches given full codebase context.

\section{Limitations}
\label{sec:limitations}
\emph{Test-vector coverage.}
Correctness is verified only for code paths exercised by the test suite;
functions never called during testing are translated without behavioral
verification.
As reported in Tables~\ref{tab:coreutils-stats} and~\ref{tab:laertes-stats},
the covered fraction varies substantially across programs, so the true
correctness of the full translated codebase cannot be guaranteed beyond
the tested paths.

\emph{TDWE completeness and agentic scope.}
Phase~1's Type-Directed Wrapper Elimination operates on a best-effort basis:
it successfully eliminates wrapper--safe function pairs for 79.9\% of
translated functions, but does not guarantee that the resulting safe
function bodies are themselves free of residual unsafe constructs such as
raw pointer manipulation or unchecked indexing.
Crucially, functions whose wrappers are removed by TDWE are \emph{not}
subsequently re-examined by the Phase~2 agentic loop, which targets only
the task categories discovered in the post-TDWE codebase
(\texttt{static\_mut}, \texttt{wrapper\_removal}, \texttt{struct\_migration},
\texttt{function\_translate}, \texttt{dead\_code}).
Any unsafe code that survives inside TDWE-translated function bodies outside
these categories therefore remains in the final output without further
refinement.

\section{Conclusion}
\label{sec:conclusion}

We presented \textsc{ENCRUST}, a two-phase pipeline that translates real-world C
programs to safe Rust while guaranteeing behavioral equivalence throughout.
The first phase, \emph{Encapsulated Substitution}, introduces an ABI-preserving
wrapper pattern that decouples per-function type-signature changes from their
call sites, enabling independent LLM-driven translation with silent rollback,
followed by a deterministic type-directed wrapper elimination pass.
The second phase, \emph{Agentic Refinement}, targets the residual unsafe
constructs that exceed per-function scope, namely \texttt{static mut} globals,
skipped wrapper pairs, and failed translations, through a tool-equipped LLM
agent operating on the whole codebase under a baseline-aware verification gate
that prevents correctness regressions.

Evaluated on 15 programs spanning 197,706 lines of C, \textsc{Encrust}
reduces unsafe lines of code by 37.8\% over the \textsc{C2Rust}
baseline while achieving 100\% test-vector correctness on all programs.
Phase~1 alone accounts for 34.3\% of this reduction; Phase~2 contributes
an additional 5.4\% relative reduction by targeting unsafe constructs that
are intractable within per-function scope by design, completing 790 of
1{,}133 discovered tasks at an average of 21.0 agentic loop iterations per task.
Across both benchmark suites, \textsc{Encrust} surpasses \textsc{C2SaferRust}
on unsafe-construct reduction while maintaining the same 100\% test correctness,
and unlike \textsc{EvoC2Rust}, achieves this reduction without sacrificing
functional equivalence.
These results demonstrate that decomposing the translation problem into
encapsulated per-function substitution followed by whole-codebase agentic
refinement is an effective strategy for scaling correctness-preserving
C-to-safe-Rust translation to real-world programs.

Several directions remain open.
Extending Phase~1 to handle inline assembly and \texttt{setjmp}/\texttt{longjmp}
would widen the class of C programs within scope.
Replacing GPT-4o with open-weight models would reduce cost and improve
reproducibility across model updates.
Finally, integrating a lightweight static pointer-provenance analysis could
expand the set of wrapper pairs eligible for elimination, closing the gap
between Phase~1-only and full \textsc{Encrust} output on programs with
heavy pointer arithmetic.

%%
%% The acknowledgments section is defined using the "acks" environment
%% (and NOT an unnumbered section). This ensures the proper
%% identification of the section in the article metadata, and the
%% consistent spelling of the heading.
%\begin{acks}
%To Robert, for the bagels and explaining CMYK and color spaces.
%\end{acks}

\section*{Data-Availability Statement}
The test cases and data for Coreutils used in this paper are taken 
from the publicly available C2SaferRust repository~\cite{c2saferrust}. 
The test vectors for Laertes will be made available on GitHub upon publication.

%%
%% The next two lines define the bibliography style to be used, and
%% the bibliography file.
\bibliographystyle{ACM-Reference-Format}
\bibliography{acmart}

\end{document}